\documentclass[journal]{IEEEtran}
\usepackage{amssymb}
\usepackage{amsmath}
\usepackage{amsthm}
\usepackage{mathrsfs}
\usepackage{graphicx}

\usepackage{subcaption}

\usepackage{subeqnarray}
\usepackage{cite}
\usepackage{color}
\usepackage{extarrows}
\usepackage{mathtools}
\usepackage{dsfont}
\usepackage{float}
\usepackage{enumerate}
\usepackage{marvosym}
\usepackage{upgreek}
\usepackage{url}
\usepackage{bm}

\usepackage{multirow}
\usepackage{multicol}
\usepackage{array}
\usepackage{arydshln}
\usepackage{booktabs}  
\usepackage{makecell}

\usepackage[end]{algpseudocode}  
\usepackage{algorithmicx}
\usepackage{algorithm}

\DeclareMathOperator{\sgn}{sgn}


\def\bD{\ensuremath{{\bf D}}}
\def\bG{\ensuremath{{\bf G}}}

\def\bE{\ensuremath{{\bf E}}}

\def\bM{\ensuremath{{\bf M}}}

\def\bx{\ensuremath{{\bf x}}}

\def\bn{\ensuremath{{\bf n}}}

\def\by{\ensuremath{{\bf y}}}

\def\bK{\ensuremath{{\bf K}}}

\def\bD{\ensuremath{{\bf D}}}

\def\bu{\ensuremath{{\bf u}}}
\def\bk{\ensuremath{{\bf k}}}

\def\bx{\ensuremath{{\bf x}}}

\def\bE{\ensuremath{{\bf E}}}

\def\bD{\ensuremath{{\bf D}}}

\def\bw{\ensuremath{{\bf w}}}
\def\bx{\ensuremath{{\bf x}}}

\def\bE{\ensuremath{{\bf E}}}

\def\bK{\ensuremath{{\bf K}}}

\def\by{\ensuremath{{\bf y}}}

\def\bM{\ensuremath{{\bf M}}}

\captionsetup[figure]{font=small}

\newtheorem{theorem}{Theorem}
\newtheorem{lemma}{Lemma}

\newtheorem{assumption}{Assumption}
\begin{document}
\title{Deep, convergent, unrolled half-quadratic splitting for image deconvolution}
\author
{Yanan~Zhao,~
Yuelong~Li,~
Haichuan~Zhang,~
Vishal~Monga,~
and Yonina C. Eldar
\thanks{Yanan Zhao is with the School of Electrical and Electronic Engineering, Nanyang Technological University, Singapore.}
\thanks{Yuelong Li is with Amazon Lab 126, San Jose, CA, 94089 USA }
\thanks{Haichuan Zhang and Vishal Monga are with the Department of Electrical Engineering, The Pennsylvania State University, USA}
\thanks{Yonina C. Eldar is with the Department of Electrical Engineering, Faculty of Math and Computer Science, Weizmann Institute of Science, Israel.}
}
\markboth{submitted to IEEE TRANSACTIONS ON COMPUTATIONAL IMAGING}%
{Shell \MakeLowercase{\textit{et al.}}: Bare1 Demo of IEEEtran.cls for Journals}

\maketitle
\begin{abstract}
  In recent years, algorithm unrolling has emerged as a powerful technique for designing interpretable neural networks based on iterative algorithms.  Imaging inverse problems have particularly benefited from unrolling-based deep network design since many traditional model-based approaches rely on iterative optimization. Despite exciting progress, typical unrolling approaches heuristically design layer-specific convolution weights to improve performance. Crucially, convergence properties of the underlying iterative algorithm are lost once layer-specific parameters are learned from training data. We propose an unrolling technique that breaks the trade-off between retaining algorithm properties while simultaneously enhancing performance. We focus on image deblurring and unrolling the widely-applied Half-Quadratic Splitting (HQS) algorithm. We develop a new parametrization scheme which enforces layer-specific parameters to asymptotically approach certain fixed points. Through extensive experimental studies, we verify that our approach achieves competitive performance with state-of-the-art unrolled layer-specific learning and significantly improves over the traditional HQS algorithm. We further establish convergence of the proposed unrolled network as the number of layers approaches infinity, and characterize its convergence rate.  Our experimental verification involves simulations that validate the analytical results as well as comparison with state-of-the-art non-blind deblurring techniques on benchmark datasets. The merits of the proposed convergent unrolled network are established over competing alternatives, especially in the regime of limited training.
\end{abstract}

\section{Introduction}\label{sec:introduction}
    \par\IEEEPARstart{I}{mage} deconvolution refers to the process of recovering a sharp image from a recorded image corrupted by blur and noise. In the computational imaging literature, motion deblurring remains important since camera shakes are common in photography. The problem continues to attract attention because of the popularity of smartphone cameras, wherein effective hardware solutions such as professional camera stabilizers are difficult to deploy due to cost and space restrictions \cite{Wang2022robustloc,she2023robustmat,zhao2023convergent}. Therefore, deblurring algorithms are highly desirable.

Image deconvolution consists of blind and non-blind image deconvolution. The task of blind image deconvolution is to recover the sharp image given only the blurry image, without knowing the blur kernels~\cite{Perrone2016,Pan2018,Tao2018,ge2022blind}. Non-blind image deconvolution, where precise knowledge of the blur kernels are assumed known \emph{a priori}, continues to be an interesting topic despite decades of algorithmic developments.
In a typical computational imaging system, the performance of non-blind deconvolution
algorithms usually determines the quality of the reconstructed images. We concentrate on non-blind image deconvolution with the particular case of motion deblurring when conducting experimental studies. Nevertheless, our
formulation and analysis are versatile across blur kernels.

Early model-based works on image deblurring include Wiener
filters~\cite{wiener1949extrapolation} and Richardson-Lucy
iterations~\cite{Richardson1972}. More recent methods benefit from studies
about natural image statistics, which demonstrate that the image gradients often follow a sparse distribution. A popular line of research thus focuses on modeling the statistics of image gradients. One example is the Total-Variation (TV)
prior~\cite{rudin1992nonlinear}, which has the favorable property of being
convex and easy to optimize. However, in reality the gradient distribution
actually follows a hyper-Laplacian prior~\cite{levin2007image,Levin2011,ZhaKanSon:C23b,KanSonDinTay:C21}, which could be computationally demanding. Krinshnan {\it et
al.}~\cite{krishnan2009fast} proposed a fast approximate algorithm based on a
Look-Up-Table (LUT). However, since gradients are low-level image features,
these methods do not capture the high-level and non-local correlations in
natural images, which has been shown to be an important feature.
Danielyan {\it et al.}~\cite{danielyan2011bm3d} recast the Block
Matching 3-D (BM3D) algorithm as a frame operator, and integrate it into the analysis and synthesis deblurring framework, by imposing sparsity constraints over the transform coefficients. In doing so, they are effectively modeling the image statistics at a patch level rather than merely focusing on neighboring pixels. More recently, M{\"a}kinen {\it et al.} \cite{makinen2020collaborative} introduced a method for exact computation of the noise variance in a transform domain and embeds the new variance calculation into the BM3D algorithm, which yields significant improvements in BM3D denoising both visually and in Peak-Signal-to-Noise Ratio (PSNR).

All aforementioned techniques hinge on analytic or handcrafted prior models,
which may not faithfully fit real data distributions. Therefore, many works strive to learn prior models from real data. In~\cite{zoran2011learning},
Zoran {\it et al.} learn prior models from real image patches using a Gaussian Mixture Model (GMM), and integrate them into the well-known Half-Quadratic Splitting (HQS) framework dubbed Expected Patch Log Likelihood (EPLL) framework with GMM prior. As a follow-up, Sun {\it et al.} further extend this
prior into multiple scales, and develop specific priors by sampling patches
from similar images. While being more effective than conventional gradient
sparsity priors, these methods are typically slow. Rather
than learning the patch distribution, Schmidt and
Roth~\cite{schmidt2014shrinkage} learn the form of the shrinkage operator in HQS, by approximating it with a Gaussian Radial Basis Function (RBF), and learning the parameters through stage-by-stage training. Buzzard {\it et al.} \cite{buzzard2018plug} used the idea of Consensus Equilibrium (CE), which broadens the inclusion of a much wider variety of both forward components and prior components without the need for either to be expressed. This approach enables the use of trained DnCNNs, as proposed in \cite{zhang2017beyond}, at five different noise levels simultaneously in image denoising.

In recent years, deep neural networks have become the mainstream approach in learning-based methods. Xu {\it et al.}~\cite{LXu2014} learn a deconvolution Convolutional Neural Network (CNN) by decomposing the convolution weights into separable filters, which enable them to use large weights in the network. A follow-up work~\cite{ren2018deep} revised the weights decomposition scheme through a low-rank approximation. Another line of research applies pre-deconvolution techniques. Rather than feeding the original blurry images into the network directly, one can obtain a deconvolved image by deblurring it with a simple deconvolution algorithm such as a Wiener filter, and then input the deconvolved image into a network to perform artifact removal. Schuler {\it et al.}~\cite{schuler2013machine} employ a Multi-Layer Perceptron (MLP) to denoise the deconvolved patches. Zhang {\it et al.}~\cite{zhang2017} train a set of fast and effective CNN denoisers and integrate them into a model-based optimization method of HQS to solve the image deblurring problem.
Zhang {\it et
al.}~\cite{zhang2017learning} stack a series of fully convolutional networks
to recover the sharp edges from the image gradients. Motivated by deep residual
learning in the image super-resolution literature~\cite{kim2016accurate}, Wang
and Tao~\cite{wang2018training} employ a very deep CNN to remove the artifacts of the deconvolved image, which are modeled as the residual signal. Son and Lee~\cite{son2017fast} further adopt the residual connections from ResNet~\cite{he2016deep} into their network. Dong {\it et al.} \cite{dong2021learning} proposed a spatially-variant MAP model (SVMAP) by embedding deep neural networks within the constraints of the maximum a-posteriori (MAP) framework. The work in \cite{dong2021dwdn} performs deconvolution in a feature space instead of standard image space, by combining the classic Wiener deconvolution with deep learning features. Fang {\it et al.} \cite{fang2022robust} propose a kernel error term to rectify the given kernel at the time of performing the deconvolution and a residual error term is to deal with the outliers caused by noise or saturation.  Zhang {\it et al.} \cite{zhang2023infwide} proposes a two-branch architecture that can perform high-quality night photograph deblurring with sophisticated noise and saturation regions. Aggarwal {\it et al.} \cite{aggarwal2018modl} introduce an image reconstruction framework with a CNN-based regularization prior alongside conjugate gradient (CG) optimization.

\par While deep learning is empirically successful with sufficient training data, a major drawback is that it usually sacrifices interpretability, as the deep networks are constructed by stacking typical regression layers rather than derived from physical mechanisms. Gregor {\it et al.}~\cite{Gregor2010,monga2021algorithm} develop a sparse coding technique which could be regarded as a bridge between traditional iterative algorithms and modern deep neural networks, and hence offers promise in filling the interpretability gap. Essentially, for a particular iterative algorithm each iteration step could be mapped into one network layer and executing the algorithm a finite number of times means stacking such layers together. These concatenated layers form a deep network, with algorithm parameters mapped into network parameters. The parameters can then be learned from real datasets via end-to-end training. In recent years, there is a growing trend of research along this route in various imaging fields.

Prior unrolling works on blind deblurring~\cite{Li2019,Li2020,nir2022model,huang2022unrolled,richmond2022nonuniform} have
already demonstrated the merits of this technique. For (photon limited) non-blind deblurring, Sanghvi {\it et al.} \cite{sanghvi2022photon} unroll a Plug-and-Play algorithm for a fixed number of iterations while specifically addressing 
 the scenario of  Poisson noise. However, to ensure good
performance in practice, the network layers deviate from the original
iterations in that layer-specific parameters are adopted, potentially jeopardizing convergence. Similar issues also persist in other
unrolling works~\cite{hershey2014deep}, where custom layers are
incorporated, and convergence guarantees for unrolled networks have become a common open question.

From an algorithmic perspective, HQS is a generic and widely applied technique
for solving prior-based non-blind image deblurring. The convergence of
conventional HQS is established in~\cite{wang2008new}, where Wang {\it et al.} derive strong convergence properties for the algorithm including finite
convergence for certain variables. In addition, they quantitively characterize
the convergence rate, which turns out to be at least $q$-linear~\cite{jay2001note} when fixing the
penalty parameters. In practice, the convergence speed could be further
accelerated when combined with a continuation scheme. Nevertheless, their
analysis is restricted to the scenarios where the parameters are fixed per
iterations, and hence cannot be directly applied to many unrolling approaches
where the parameters are altered.

\textbf{Motivations and Contributions:}
In existing unrolled deep networks for blind image deblurring~\cite{Li2020,huang2022unrolled,richmond2022nonuniform}, 
significant performance gains (in the sense of reconstructed image quality) have been demonstrated by layer-specific learning including the case of unrolled HQS \cite{Li2020,Mouna2023}. The introduction of layer-specific learned parameters such as independent filters per layer also has a significant drawback in that it undermines the convergence guarantees associated with traditional HQS, whose parameters are recurrent across
iterations.

In this paper, we propose a custom unrolled HQS network that preserves the analytical merits of HQS while
simultaneously retaining the practical performance benefits of unrolling. Different from prior independent layer-specific learning, we develop a \textit{structured} per-layer parameterization which we prove effectively leads to convergence. The main contributions of
this paper are as follows:

\begin{itemize}
  \item [$\bullet$] \textbf{Deep Convergent Unrolling for Non-blind deblurring (DECUN)}: We propose a deep interpretable neural network called DECUN by unrolling the widely-applied HQS algorithm. To ensure convergence and network modeling power/performance, we develop a specific structured parametrization scheme.
  \item [$\bullet$] \textbf{Analysis of DECUN Properties:} We prove the convergence of DECUN under the aforementioned parametrization. Furthermore, we quantitatively characterize its convergence rate both in a general form and for certain special cases. We verify our theoretical findings via simulation studies.
  \item[$\bullet$] \textbf{Performance and Computational benefits}: By experimental comparison with state-of-the-art deblurring techniques, DECUN achieves a favorable gain of PSNR by about 1 dB and an SSIM of 0.1 over both traditional iterative algorithms and modern deep neural networks, while preserving convergence and interpretability.
\end{itemize}

We note that on the theoretical front there are works concentrating on analyzing the convergence of Learned Iterative Shrinkage and Thresholding Algorithm (LISTA). In particular, Chen {\it et al.}~\cite{chen2018theoretical} derive necessary conditions for the network parameters to obey in order to ensure convergence; Liu {\it et al.}~\cite{liu2019alista} further analytically specify a formula for the network weights. Our work is of the same spirit in that we also develop conditions for
parameters and analyze convergence; however, we focus on a different underlying iterative algorithm (HQS vs. LISTA) and carry out specific realistic experimental studies for non-blind image deblurring.

The rest of the article is organized as follows. In Section~\ref{sec:hqs_unrolling} we introduce both traditional and unrolled HQS. In Section~\ref{sec:hqs_analysis}, we discuss our new parameterization scheme, and provide rigorous convergence analysis and characterize the convergence rate of our proposed unrolled network, dubbed DECUN;\@ in Section~\ref{sec:experiemnts} we verify our theory through simulation studies and demonstrate the practical advantages of DECUN via extensive experimental studies including enhanced generalizability. Section~\ref{sec:conclusions} concludes the paper.
\section{Unrolled HQS for Image Deblurring}\label{sec:hqs_unrolling}

\subsection{Formulation of the Deblurring Problem}\label{ssec:deconv_formulation}
Assuming the camera motion is purely translational and the scene is planar (ignoring the depth variations), image blurring can be modeled as a discrete convolution process~\cite{kundur1996blind}:
 \vspace{-1mm}
\begin{equation}
  \by = \bk\ast \bu + \bn,\label{eq:conv_model}
  \vspace{-1mm}
\end{equation}
where $\by\in\mathbb{R}^m$ denotes the observations of the signal $\bu\in\mathbb{R}^n$, received from a system with impulse response $\bk\in\mathbb{R}^k$ and $\bn\in\mathbb{R}^m$ is an additive random noise process which is usually modeled as independent identically distributed (i.i.d.) Gaussian. By re-writing convolution as a matrix multiplication, we can cast~\eqref{eq:conv_model} into the following formula:
 \vspace{-1mm}
\begin{equation}
  \by = \bK\bu+\bn,\label{eq:linear_model}
  \vspace{-1mm}
\end{equation}
where $\bK\in\mathbb{R}^{m\times n}$ is the Toeplitz matrix corresponding to the convolution kernel $\bk$. In \emph{non-blind deconvolution}, it is assumed that $\bK$ is pre-determined and the goal is to solve for $\bu$.

In reality, $\bk$ is usually considered as a low-pass filter since it essentially performs temporal averaging of displaced versions of the original signal $\bu$. In this regard, $\bK$ is an ill-posed linear operator and the non-blind deconvolution problem is typically an \emph{ill-posed} linear inverse problem.
In order to recover $\bu$ faithfully, we pose additional assumptions on the structure of $\bu$. These assumptions can be expressed as statistical priors, or deterministic regularizers. In particular, a popular prior model leverages natural image statistics, which enforces gradient sparsity of the latent image $\bu$. This can be achieved by regularizing the $\ell^1$-norm of image gradients, which is well-known to be a convex function. This technique, commonly called \emph{total variation minimization} \cite{Goldstein09}, reduces to solving the following convex optimization problem:
\vspace{-1mm}
\begin{equation}
    \min_{\bu\in\mathbb{R}^n} \frac{\mu}{2}\|\by-\bK\bu\|_{2}^{2}+\|\bD_x\bu\|_1+\|\bD_y\bu\|_1\label{eq:tv_deconv},
    \vspace{-1mm}
\end{equation}
where $\bD_x,\bD_y\in\mathbb{R}^{n\times n}$ are the gradient operators which capture horizontal and vertical image gradients, respectively, and $\mu>0$ is a regularization parameter that controls the strength of enforcing gradient sparsity. In practice, image gradients can be computed by linear filtering and $\bD_x,\bD_y$ are Toeplitz operators. For expositional ease, we identify the operators $\bD_x$ and $\bD_y$ with their underlying filters henceforth.
\vspace{-4mm}
\subsection{Solving Non-blind Deconvolution Using HQS}\label{ssec:half_quadratic_splitting}
A widely employed algorithm for solving~\eqref{eq:tv_deconv} is the 
\emph{HQS} algorithm~\cite{wang2008new}. By introducing auxiliary variables $\bw_x,\bw_y\in\mathbb{R}^n$ as surrogates of image gradients, \eqref{eq:tv_deconv} can be cast into the following constrained minimization problem:
\vspace{-1mm}
\begin{equation}\label{eq:var_split}
    \begin{split}
        \min_{\bu,\bw_x,\bw_y\in\mathbb{R}^n}&\quad\frac{\mu}{2}\|\by-\bK\bu\|_{2}^{2}+\|\bw_x\|_{1}+\|\bw_y\|_1,\\
        \text{subject to }&\quad\bD_x\bu=\bw_x,\quad\bD_y\bu=\bw_y.
    \end{split}
\end{equation}
We can relax the equality constraints in~\eqref{eq:var_split} to yield:
\vspace{-1mm}
\begin{equation}\label{eq:hqs_form}
    \begin{split}
        &\min_{\bu,\bw_x,\bw_y\in\mathbb{R}^n}\quad\frac{\mu}{2}\|\by-\bK\bu\|_{2}^{2}+\|\bw_x\|_{1}+\|\bw_y\|_1\\
        &\qquad\qquad+\frac{\beta}{2}\left(\|\bD_x\bu-\bw_x\|_{2}^{2}+\|\bD_y\bu-\bw_y\|_2^2\right),
        \vspace{-1mm}
    \end{split}
\end{equation}
where $\beta>0$ is a parameter controlling the relaxation strength. The HQS algorithm proceeds by minimizing over $\bu$ and $\bw_x,\bw_y$ in an alternating fashion. A desirable property of~\eqref{eq:hqs_form} is that each sub-problem (over $\bu$ or over $\bw$) admits closed-form solutions, enabling to express the iteration steps analytically for $l=1,2,\dots$:
\vspace{-2mm}
\begin{equation}\label{eq:hqs_iter}
    \begin{split}
        \bM&\gets\sum_{i\in{x,y}} {\bD_i}^\mathrm{T} \bD_i + \frac{\mu}{\beta}\bK^\mathrm{T} \bK,\\
        \bu^l&\gets\bM^{-1}\left(\sum_{i\in{x,y}} {\bD_i}^\mathrm{T} \bw^l_i+\frac{\mu}{\beta}\bK^\mathrm{T} \by\right)\\
        \bw_x^{l+1}&\gets s_{\frac{1}{\beta}}\left(\bD_x\bu^l\right),\quad\bw_y^{l+1}\gets s_{\frac{1}{\beta}}\left(\bD_y\bu^l\right),
\vspace{-2mm}
    \end{split}
\end{equation}
where $s_\lambda$ is the soft-thresholding operator defined elementwise as $s_\lambda(x)=\sgn(x)\cdot\max\{|x|-\lambda,0\}$. Since $\bD_x,\bD_y,\bK$ are all Toeplitz operators, matrix inversion can be implemented efficiently via the Discrete Fourier Transform (DFT).

Convergence of HQS has been established in~\cite{wang2008new} under mild technical assumptions. Furthermore, it has been shown that the convergence rate is at least $q$-linear, namely:
\vspace{-1mm}
    \[
        \|\bw^{l+1}-\bw^\ast\|\leq\lambda\|\bw^l-\bw^\ast\|,
        \vspace{-1mm}
    \]
where \footnote{In our analysis $\lambda$ plays the role of $q$.}$\lambda<1$ depends on $\bD_x,\bD_y$ and $\bK$. The analysis in \cite{wang2008new} hinges on the fact that $\bD_x,\bD_y$ and $\beta$ are \textit{fixed} parameters across iterations, and does not apply to unrolling since learned parameters change per layer/iteration.
\vspace{-4mm}
\subsection{Deep Unrolled Half-Quadratic Splitting}\label{ssec:deep_hqs_unroll}
In conventional HQS iterations~\eqref{eq:hqs_iter}, the operators
$\bD_x,\bD_y$, or their underlying gradient filters, are determined
analytically. The parameters $\beta,\mu$ are also chosen as fixed values. In practice, better filters and parameters can be learned from real datasets through end-to-end training to improve performance. Indeed, as discussed
in~\cite{Gregor2010}, for a particular iterative algorithm, each iteration step
can be considered as one layer of a network, and executing the iterative steps
corresponds to concatenating these layers together to form a deep network. For
instance, the iteration steps in~\eqref{eq:hqs_iter} comprise a series of
linear mappings, followed by non-linear soft-thresholding operations, which
together form a single network layer. In this way, the HQS algorithm is unrolled into a deep network, which we train end-to-end to optimize the
filters and parameters.

Therefore, we further generalize HQS by adopting different parameters $\bD_x^l,\bD_y^l,\beta^l$ per layer. In addition, for each layer we use $C$ filters instead of two, i.e., we embed ${\{\bD_i^l\}}_{i=1}^C$ into each layer. With these generalizations, each network layer now comprises the following iterations:
\vspace{-2mm}
\begin{equation}\label{eq:unroll_hqs_iter}
    \begin{split}
        \bM^l&\gets\sum_{i=1}^C {\bD^{l}_i}^\mathrm{T} \bD^{l}_i + \frac{\mu}{\beta^{l}}\bK^\mathrm{T} \bK,\\
        \bu^l&\gets{\bM^l}^{-1}
        \left(\sum_{i=1}^C {\bD_i^{l}}^\mathrm{T} \bw_i^{l}+\frac{\mu}{{\beta}^{l}}\bK^\mathrm{T} \by\right),\\
        \bw_i^{l+1}&\gets s_{\frac{1}{\beta^l}}\left(\bD_i^l \bu^l\right),\quad i=1,2,\dots,C.
    \end{split}
    \vspace{-2mm}
\end{equation}
To simplify the notations, let $\bw^l = {\left[{\bw_1^l}^T,{\bw_2^l}^T,\dots,{\bw_{C}^l}^T\right]}^T$, $\bD^l={\left[{\bD^l_1}^T,{\bD^l_2}^T,\dots,{\bD^l_C}^T\right]}^T$ and $s^{l}(\cdot)=s_{\frac{1}{\beta^l}}(\cdot)$. Then iterations~\eqref{eq:unroll_hqs_iter} can be re-written as:
\begin{align}
    \bM^l&\gets{\bD^{l}}^\mathrm{T} \bD^{l} + \frac{\mu}{\beta^{l}}\bK^\mathrm{T}\bK,\label{eq:para_M_l}\\
    \bu^{l}&\gets{\bM^l}^{-1}\left({\bD^{l}}^\mathrm{T} \bw^{l}+\frac{\mu}{{\beta}^{l}}\bK^\mathrm{T} \by\right),\label{eq:u_update}\\
    \bw^{l+1}&\gets s^{l}\left(\bD^l \bu^l\right).\label{eq:w_update}
    \vspace{-1mm}
\end{align}
Furthermore, define
\begin{equation}
  h^{l}(\bw^{l})= \bD^{l}\bu^{l}=\bD^{l}{\bM^l}^{-1}
           \left({\bD^{l}}^\mathrm{T} \bw^{l}+\frac{\mu}{{\beta}^{l}}\bK^\mathrm{T} \by\right).\label{eq:function_h_l}
\end{equation}
We may express the iterations succinctly as
\begin{align}
 \bw^{l+1}= s^{l}\left(h^{l}\left(\bw^{l}\right)\right). \label{w_l1}
\end{align}

With the mappings $s^{l}$, $h^{l}$ defined, the unrolled network can be visually depicted
as a diagram in Fig.~\ref{fig:network}. Compared to traditional HQS, unrolled HQS performs much better in practice thanks to its enriched set of parameters~\cite{Li2020}; furthermore, it typically executes merely tens of layers in practice, as opposed to hundreds or thousands of iterations, and is usually much more computationally efficient than HQS.
However, since $\bD^l$ and $\beta^l$ are layer-specific parameters learned with training data, the convergence guarantee for HQS no longer holds, as we will demonstrate empirically in Section~\ref{ssec:exp_simulation};
additionally, interpretability might be undermined because the unrolled
network, when having infinitely many layers, may no longer converge to a
particular fixed point. Nevertheless, as we will formally prove in Section~\ref{sec:hqs_analysis}, under certain conditions, if both the filters and the parameters are asymptotically fixed, i.e., both converge to certain fixed values, then the convergence guarantee could be preserved.

\begin{figure*}
    \centering
    \includegraphics[width=0.8\textwidth]{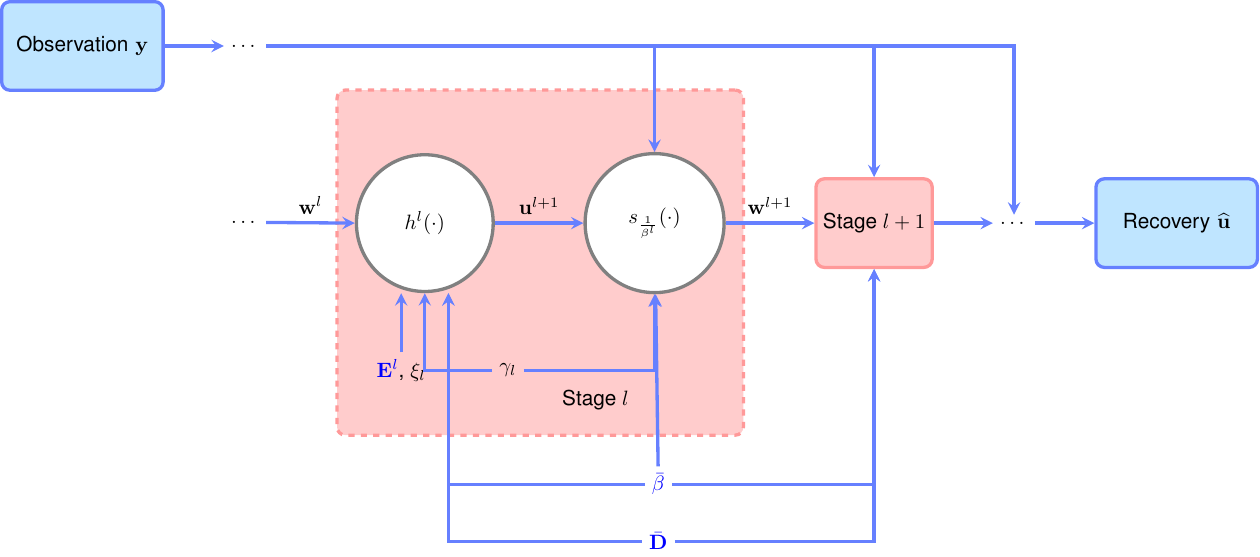}
    \caption{Diagram of the unrolled HQS network. Trainable parameters are colored in blue. The linear operator $h^l$ is defined in~\eqref{eq:function_h_l}, whereas $s^{l}$ is the soft-thresholding operator defined in~\eqref{S_operator}.}\label{fig:network}
\end{figure*}

\section{Deep Convergent, Unrolled HQS}\label{sec:hqs_analysis}
In this section, we first propose a deep interpretable convergent neural network architecture dubbed Deep Convergent Unrolling for Non-blind Deblurring (DECUN) by developing a new parametrization scheme in HQS.\@ Then, we provide a convergence proof and determine the convergence rate.

\subsection{Convergence Analysis of DECUN}
In~\cite{wang2008new}, convergence is established for fixed $\bD,\mu$ and $\beta$, where the mappings $s$ and $h$ are fixed per iteration. However, in unrolled HQS, the fixed point of $s^{l}\circ h^l$ is not defined. We hypothesize that, if ${\{\bD^l\}}_l$ and ${\{\beta^l\}}_l$ are convergent sequences, i.e., $\lim_{l\rightarrow\infty}\bD^l=\bar{\bD}$ and $\lim_{l\rightarrow\infty}\beta^l=\bar{\beta}$, then ${\{\bw^l\}}_l$ converges to a fixed point of $s^{\ast}\circ h^\ast$, with $s^{\ast},h^\ast$ being the asymptotic mappings of $s^{l}$ and $h^l$, respectively.

In particular, we propose the following parameterization:
\begin{equation}
    \bD^l=\bar{\bD}+\xi_l\bE^l,\quad\beta^l=\bar{\beta}+\gamma_l,
  \label{eq:parameterization}
\end{equation}
where $\xi_l$ and $\gamma_l$ are both real vanishing sequences with
$\lim_{l\rightarrow\infty}\xi_l=0$ and $\lim_{l\rightarrow\infty}\gamma_l=0$,
and $\bar{\bD},\bE^l$ are both Toeplitz matrices representing discrete convolutions, and $\bE^l$ is norm-bounded. Since $\bE^l$'s are layer-specific parameters, the number of parameters for $\bD^l$ does not decrease due to the parameterization scheme shown in \eqref{eq:parameterization}, which enables the unrolled network to still retain significant modeling power. In practice,
$\bar{\bD}$, $\bar {\beta}$ and $\bE^l$'s are learned from real-world training data whereas $\xi_l$ and
$\gamma_l$ are chosen real sequences. Next, we formally prove that,
this approach leads to convergence and characterizes the convergence rate.
\vspace{-2mm}
\begin{assumption}\label{assump}
For any iteration $l$, $\mathcal{N}(\bK)\cap \mathcal{N}(\bD^{l})=\{\mathbf{0}\}$, where $\mathcal{N}(\cdot)$ represents the null space of a matrix.
\end{assumption}
\vspace{-2mm}
This assumption ensures the nonsingularity of $\bM^l$ defined in \eqref{eq:para_M_l}, so that $\mathbf{G}^{l}=\bD^{l}{(\bM^{l})}^{-1}{(\bD^{l})}^\mathrm{T}$ could be well defined and used in Theorem \ref{thm:convergence_rate}.
\vspace{-2mm}
\begin{theorem}\label{convergence_l}
  (Convergence result) For a fixed $\mu > 0 $, under assumption \ref{assump} and the parameterization given in~\eqref{eq:parameterization}, suppose $\xi_{l}\in\mathbb{R}$, $\gamma_{l}\in\mathbb{R}$ are absolutely summable, meaning $\sum_l|\xi_l|$ and $\sum_l|\gamma_l|$ both converge. Furthermore, suppose that ${\{\bE^l\}}_l$ forms a bounded sequence, i.e., $\|\bE^l\|\leq M$ for some $M > 0$. Then the sequence $\{(\bw^{l},\bu^{l})\}$ generated by executing DECUN defined in \eqref{eq:u_update} and \eqref{eq:w_update} from any starting point $\{(\bw^{0},\bu^{0})\}$ converges to $\{(\bw^{\ast},\bu^{\ast})\}$ as $l\rightarrow\infty$.
\end{theorem}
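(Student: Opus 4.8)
The plan is to recast the DECUN iteration~\eqref{eq:w_update} as the inhomogeneous fixed-point recursion $\bw^{l+1}=T^{l}(\bw^{l})$ with $T^{l}:=s^{l}\circ h^{l}$, and to treat it as a vanishing perturbation of the \emph{limiting} map $T^{\ast}:=s^{\ast}\circ h^{\ast}$ obtained by sending $\bD^{l}\to\bar{\bD}$ and $\beta^{l}\to\bar{\beta}$. Concretely, writing $h^{l}(\bw)=\bG^{l}\bw+\bb^{l}$ with $\bG^{l}=\bD^{l}(\bM^{l})^{-1}(\bD^{l})^{\mathrm{T}}$ and $\bb^{l}=\tfrac{\mu}{\beta^{l}}\bD^{l}(\bM^{l})^{-1}\bK^{\mathrm{T}}\by$, I first identify the unique fixed point $\bw^{\ast}$ of $T^{\ast}$ and its companion $\bu^{\ast}$. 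I then intend to combine three ingredients: (i) a quasi-contraction estimate for the limiting map, $\norm{T^{\ast}(\bw)-\bw^{\ast}}\le\lambda\norm{\bw-\bw^{\ast}}$ with $\lambda<1$, which is exactly the fixed-parameter $q$-linear rate established for HQS in~\cite{wang2008new} and quantified through $\bG^{\ast}$ in Theorem~\ref{thm:convergence_rate}; (ii) a per-layer deviation bound $\norm{T^{l}(\bw)-T^{\ast}(\bw)}\lesssim\eta_{l}(1+\norm{\bw})$ with $\eta_{l}:=\abs{\xi_{l}}+\abs{\gamma_{l}}$; and (iii) a discrete Gronwall argument that turns summability of $\eta_{l}$ into convergence of the iterates.

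The technical core is ingredient (ii), which rests on a uniform conditioning estimate. Under Assumption~\ref{assump} the limiting matrix $\bar{\bM}=\bar{\bD}^{\mathrm{T}}\bar{\bD}+\tfrac{\mu}{\bar{\beta}}\bK^{\mathrm{T}}\bK$ is nonsingular; since $\norm{\bD^{l}-\bar{\bD}}\le M\abs{\xi_{l}}\to0$ and $\abs{\beta^{l}-\bar{\beta}}=\abs{\gamma_{l}}\to0$ with $\bar{\beta}>0$, the smallest eigenvalue of $\bM^{l}$ is bounded below by a positive constant for all large $l$, whence $\norm{(\bM^{l})^{-1}}$ is uniformly bounded and the $\bb^{l}$ are uniformly bounded. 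I would then use the smoothness of the map $(\bD,\beta)\mapsto\bD\bM^{-1}\bD^{\mathrm{T}}$ on this well-conditioned regime — i.e.\ the perturbation identity $\bA^{-1}-\bB^{-1}=\bA^{-1}(\bB-\bA)\bB^{-1}$ applied to $\bM^{l}$ versus $\bar{\bM}$ — to obtain $\norm{\bG^{l}-\bG^{\ast}}\le c\,\eta_{l}$ and $\norm{\bb^{l}-\bb^{\ast}}\le c\,\eta_{l}$. For the nonlinear part I would exploit that soft-thresholding is nonexpansive and that $\abs{s_{\tau}(t)-s_{\tau'}(t)}\le\abs{\tau-\tau'}$ coordinatewise, so that $\norm{s^{l}(\bz)-s^{\ast}(\bz)}\lesssim\abs{1/\beta^{l}-1/\bar{\beta}}\lesssim\abs{\gamma_{l}}$ uniformly in $\bz$; splitting $T^{l}(\bw)-T^{\ast}(\bw)$ through $s^{l}(h^{\ast}(\bw))$ and using nonexpansiveness of $s^{l}$ then yields the claimed bound $\norm{T^{l}(\bw)-T^{\ast}(\bw)}\le\eta_{l}(c_{1}+c_{2}\norm{\bw})$.

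With $a_{l}:=\norm{\bw^{l}-\bw^{\ast}}$ these pieces combine into $a_{l+1}\le\lambda a_{l}+\eta_{l}(c_{1}+c_{2}\norm{\bw^{l}})$, and substituting $\norm{\bw^{l}}\le a_{l}+\norm{\bw^{\ast}}$ gives $a_{l+1}\le(\lambda+c_{2}\eta_{l})\,a_{l}+c_{3}\eta_{l}$. Because $\eta_{l}\to0$, the effective contraction factor $\lambda+c_{2}\eta_{l}$ falls below $(1+\lambda)/2<1$ for all large $l$; together with $\sum_{l}\eta_{l}<\infty$, the standard summable-perturbation lemma first yields that $\{a_{l}\}$ (hence $\{\bw^{l}\}$) is bounded and then that $a_{l}\to0$, i.e.\ $\bw^{l}\to\bw^{\ast}$. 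Convergence $\bu^{l}\to\bu^{\ast}$ follows immediately by plugging the convergent $\bw^{l}$ into~\eqref{eq:u_update} and invoking continuity of $(\bD,\beta,\bw)\mapsto\bM^{-1}(\bD^{\mathrm{T}}\bw+\tfrac{\mu}{\beta}\bK^{\mathrm{T}}\by)$ on the well-conditioned regime.

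The main obstacle I anticipate is not the bookkeeping above but the quasi-contraction (i) for the limiting map. The linear part $\bG^{\ast}=\bar{\bD}\,\bar{\bM}^{-1}\bar{\bD}^{\mathrm{T}}$ is only nonexpansive: its eigenvalues lie in $[0,1]$, and, precisely because the deblurring operator $\bK$ is typically ill-posed (rank-deficient), the value $1$ is attained along $\bar{\bD}\,\mathcal{N}(\bK)$, a subspace that is nontrivial under Assumption~\ref{assump}; hence $\norm{\bG^{\ast}}=1$ and no contraction can come from $h^{\ast}$ alone. The strict decay toward $\bw^{\ast}$ must instead be harvested from the soft-thresholding, which is the content of the fixed-parameter HQS analysis in~\cite{wang2008new}; the delicate point specific to unrolling is that I need this estimate to hold at the \emph{off-trajectory} points $\bw^{l}$ produced by the perturbed maps $T^{l}$, and to interlock it with the a priori unknown boundedness of $\{\bw^{l}\}$. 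It is exactly to break this circularity that the summability, rather than mere vanishing, of $\xi_{l}$ and $\gamma_{l}$ is used.
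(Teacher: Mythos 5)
Your plan hinges on ingredient (i), a strict quasi-contraction $\|T^{\ast}(\bw)-\bw^{\ast}\|\le\lambda\|\bw-\bw^{\ast}\|$ with $\lambda<1$ for the limiting map, and this is precisely where the proposal breaks down --- as you yourself concede in your final paragraph. Whenever $\mathcal{N}(\bK)\neq\{\mathbf{0}\}$ (the typical ill-posed case, and nothing in Assumption~\ref{assump} excludes it), the matrix $\bG^{\ast}=\bar{\bD}(\bM^{\ast})^{-1}\bar{\bD}^{\mathrm{T}}$ has eigenvalue exactly $1$: for $x\in\mathcal{N}(\bK)$, $x\neq\mathbf{0}$, one has $\bM^{\ast}x=\bar{\bD}^{\mathrm{T}}\bar{\bD}x$ and hence $\bG^{\ast}\bar{\bD}x=\bar{\bD}x$ with $\bar{\bD}x\neq\mathbf{0}$ by Assumption~\ref{assump}. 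So no contraction comes from $h^{\ast}$, and the hoped-for rescue --- harvesting strict decay from the soft-thresholding via the $q$-linear rate of~\cite{wang2008new} --- does not supply what your recursion needs: that rate is an asymptotic statement along the fixed-parameter HQS trajectory (resting on finite convergence of an active set), not a global estimate valid at the arbitrary off-trajectory points $\bw^{l}$ that your Gr\"onwall argument must be fed. Without $\lambda<1$ your recursion degenerates to $a_{l+1}\le a_{l}+c\,\eta_{l}$, and summability of $\eta_{l}$ then yields only that $a_{l}=\|\bw^{l}-\bw^{\ast}\|$ converges to \emph{some} limit, not that the limit is zero. You flag this circularity explicitly but do not resolve it, so the proof is incomplete at its core.

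The paper's proof shows the contraction is not needed at all. It uses only nonexpansiveness of $s^{l}$ and $h^{l}$ (Lemmas~\ref{lemma1} and~\ref{lemma2}), bounds the perturbation $\zeta^{l}=\|s^{l}(h^{l}(\tilde{\bw}))-s^{\ast}(h^{\ast}(\tilde{\bw}))\|$ at a fixed point $\tilde{\bw}$ only (so no uniform-in-$\bw$ deviation bound of your type is required), and invokes the summable-perturbation Lemma~\ref{lemma3} to conclude that $\|\bw^{l}-\tilde{\bw}\|$ converges for \emph{every} fixed point $\tilde{\bw}$ --- a quasi-Fej\'er monotonicity property. This yields boundedness of $\{\bw^{l}\}$, and a limit-point argument then closes the gap that your scheme leaves open: for any subsequential limit $\bw^{\ast}$, pointwise convergence of $s^{l}\circ h^{l}$ to $s^{\ast}\circ h^{\ast}$ together with nonexpansiveness shows that $s^{\ast}(h^{\ast}(\bw^{\ast}))$ is also a limit point at the same distance from $\tilde{\bw}$, which (via the equality case in Lemma~\ref{lemma1}) identifies $\bw^{\ast}$ as a fixed point; substituting $\tilde{\bw}=\bw^{\ast}$ into the monotonicity relation then gives convergence of the whole sequence, and $\bu^{l}\to\bu^{\ast}$ follows by continuity as in your last step. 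If you wish to salvage your write-up, replace ingredient (i) and the Gr\"onwall recursion with this Fej\'er/limit-point bridge; your ingredient (ii), restricted to the single point $\tilde{\bw}$, is essentially the paper's $\zeta^{l}$ estimate and can be kept.
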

\vspace{-3mm}
\begin{proof}
    See Appendix~\ref{sec:proof_thm1}.
\end{proof}
\vspace{-2mm}
Essentially, Theorem~\ref{convergence_l} dictates that once we execute DECUN for infinitely many layers, its output converges to a certain fixed point. In other words, DECUN corresponds to a convergent iterative algorithm, which dramatically enhances its interpretability. We further quantify the convergence rate of DECUN in the theorem below:
\vspace{-2mm}
\begin{theorem}\label{thm:convergence_rate}
  (Convergence rate) Under assumption \ref{assump}, let $\mathbf{G}^{l}=\bD^{l}{(\bM^{l})}^{-1}{(\bD^{l})}^\mathrm{T}$ where $\bM^{l}$ is defined in~\eqref{eq:para_M_l} and suppose that $\bD^l$ and $\beta^l$ follow the parametrization in~\eqref{eq:parameterization} under the conditions of Theorem~\ref{convergence_l}. Then the sequence $\{(\bw^{l},\bu^{l})\}$  generated by~\eqref{eq:u_update} and~\eqref{eq:w_update} converges to $\{(\bw^{*},\bu^{*})\}$ with convergence rate satisfying
\begin{equation}
\begin{split}
\label{convergence_rate_w}
\vspace{-1mm}
&\|\mathbf{w}^{l+1}-\mathbf{w}^{*}\|\\
&\quad\leq(\lambda_{\max})^{l+1}\|\mathbf{w}^{0}-\mathbf{w}^{*}\|+\mathcal{F}^{-1}\left(\frac{B(w)}{1-\lambda_{\max}e^{-jw}}\right)
\vspace{-1mm}
\end{split}
\end{equation}
where $\lambda_{\max}=\max\left\{\lambda_1,\lambda_2,\dots,\lambda_{l_0},\sqrt{\frac{1}{2}\left[1 + \rho({\mathbf{G}}^\ast)\right]}\right\}$ with $\lambda_{l}=\sqrt{\rho((\mathbf{G}^{l})^{2}))}$, and $\rho({\mathbf{G}}^\ast)$ and $\rho((\mathbf{G}^{l})^{2}))$ represent the spectral radii of matrix ${\mathbf{G}}^\ast$ and $(\mathbf{G}^{l})^{2}$ respectively. Here $B(w)$ is the discrete-time Fourier transform of
\vspace{-1mm}
\begin{equation}
    b_{l} = 3|\xi_{l}|\|\bu^{\ast}\|+\frac{2|\gamma_l|}{{(\bar{\beta})}^2} 
    \vspace{-1mm}
\end{equation}
and
\begin{equation}
\label{u_J}
\vspace{-1mm}
\bu^{\ast}=\left({{(\bar{\bD})}^\mathrm{T}}\bar{\bD} + \frac{\mu}{\bar{\beta}}\bK^\mathrm{T} \bK\right)^{-1}\!\left({(\bar{\bD})}^\mathrm{T}\bw^{\ast}+\frac{\mu}{\bar{\beta}}\bK^\mathrm{T} \by\right)
\vspace{-1mm}
\end{equation}
where $\bw^{\ast}$ is the fixed point of $s^{\ast}\circ h^\ast$: $\bw^{\ast}=s^{\ast}(h^\ast(\bw^{\ast}))$, which could be achieved from \eqref{eq:u_update} and \eqref{eq:w_update} as $l\rightarrow\infty$.
\begin{proof}
    See Appendix~\ref{sec:proof_thm2}.
\end{proof}
\end{theorem}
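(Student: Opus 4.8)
The plan is to collapse the vector recursion $\bw^{l+1}=s^{l}(h^{l}(\bw^{l}))$ into a scalar inequality $a_{l+1}\le\lambda_{\max}a_{l}+b_{l}$ for $a_{l}=\|\bw^{l}-\bw^{\ast}\|$, and then solve that linear recursion in closed form. First I would record the two structural facts about the per-layer map. The map $h^{l}$ is affine with linear part $\bG^{l}=\bD^{l}(\bM^{l})^{-1}(\bD^{l})^{\mathrm{T}}$, which is symmetric positive semidefinite because $\bM^{l}\succ0$ under Assumption~\ref{assump}; hence $\|\bG^{l}\|_{\mathrm{op}}=\rho(\bG^{l})=\sqrt{\rho((\bG^{l})^{2})}=\lambda_{l}$, and a short spectral argument (the eigenvalues of $\bG^{l}$ coincide with those of $\bI-\tfrac{\mu}{\beta^{l}}(\bM^{l})^{-1}\bK^{\mathrm{T}}\bK$, which lie in $[0,1)$ by Assumption~\ref{assump}) gives $\lambda_{l}<1$. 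Since the soft-thresholding operator $s^{l}$ is non-expansive, the composition $s^{l}\circ h^{l}$ is $\lambda_{l}$-Lipschitz.

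Next I would build the uniform contraction constant. Because $\bD^{l}\to\bar{\bD}$ and $\beta^{l}\to\bar{\beta}$, we have $\bG^{l}\to\bG^{\ast}$ and thus $\rho(\bG^{l})\to\rho(\bG^{\ast})$; since $\rho(\bG^{\ast})<\sqrt{\tfrac12(1+\rho(\bG^{\ast}))}<1$, there is an index $l_{0}$ beyond which $\rho(\bG^{l})\le\sqrt{\tfrac12(1+\rho(\bG^{\ast}))}$. Taking the maximum of $\lambda_{1},\dots,\lambda_{l_{0}}$ and this tail bound yields exactly the $\lambda_{\max}<1$ in the statement, a per-layer contraction factor valid for every $l$.

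The core step is the error recursion. I would split
\[
\|\bw^{l+1}-\bw^{\ast}\|\le\|s^{l}(h^{l}(\bw^{l}))-s^{l}(h^{l}(\bw^{\ast}))\|+\|s^{l}(h^{l}(\bw^{\ast}))-s^{\ast}(h^{\ast}(\bw^{\ast}))\|,
\]
bounding the first term by $\lambda_{\max}\|\bw^{l}-\bw^{\ast}\|$ through the Lipschitz estimate above. The second term measures the mismatch between the layer-$l$ operators and their limits evaluated at the fixed point, and I would further split it into a threshold part and a linear part. The threshold level and the $\tfrac{\mu}{\beta^{l}}$ weighting inside $h^{l}$ both contribute terms controlled by $|\tfrac{1}{\beta^{l}}-\tfrac{1}{\bar{\beta}}|=\tfrac{|\gamma_{l}|}{\beta^{l}\bar{\beta}}$, combining to the $\tfrac{2|\gamma_{l}|}{(\bar{\beta})^{2}}$ contribution, while the remaining linear mismatch $\|h^{l}(\bw^{\ast})-h^{\ast}(\bw^{\ast})\|$ is handled by a first-order perturbation expansion of $\bD^{l}=\bar{\bD}+\xi_{l}\bE^{l}$ and of $(\bM^{l})^{-1}$, with $\|\bE^{l}\|\le M$ bounded, yielding the $3|\xi_{l}|\|\bu^{\ast}\|$ contribution. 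Together these give $b_{l}=3|\xi_{l}|\|\bu^{\ast}\|+\tfrac{2|\gamma_{l}|}{(\bar{\beta})^{2}}$. I expect this perturbation bookkeeping, tracking how $\xi_{l}$ and $\gamma_{l}$ propagate through the matrix inverse while keeping the constants sharp, to be the main obstacle; the appearance of $\|\bu^{\ast}\|$ rather than $\|\bw^{\ast}\|$ is a useful sanity check, since $\bar{\bD}$ acts on $\bu^{\ast}$ through $h^{\ast}(\bw^{\ast})=\bar{\bD}\bu^{\ast}$.

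Finally, with $a_{l+1}\le\lambda_{\max}a_{l}+b_{l}$ in hand, unrolling the recursion gives $\|\bw^{l+1}-\bw^{\ast}\|\le\lambda_{\max}^{l+1}\|\bw^{0}-\bw^{\ast}\|+\sum_{k=0}^{l}\lambda_{\max}^{l-k}b_{k}$. The residual sum is the convolution of the causal geometric sequence $\lambda_{\max}^{n}$, whose discrete-time Fourier transform is $\tfrac{1}{1-\lambda_{\max}e^{-jw}}$ (well defined since $\lambda_{\max}<1$), with the sequence $b_{n}$ whose transform is $B(w)$. Invoking the convolution theorem rewrites this sum as $\mathcal{F}^{-1}\!\big(\tfrac{B(w)}{1-\lambda_{\max}e^{-jw}}\big)$, which completes the bound in~\eqref{convergence_rate_w}.
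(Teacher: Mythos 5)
Your skeleton coincides with the paper's proof step for step: the same split of $\|\bw^{l+1}-\bw^{*}\|$ into a contraction term $\|s^{l}(h^{l}(\bw^{l}))-s^{l}(h^{l}(\bw^{*}))\|$ plus a fixed-point mismatch term, the same spectral bound $\|h^{l}(\bw^{l})-h^{l}(\bw^{*})\|\leq\sqrt{\rho((\bG^{l})^{2})}\,\|\bw^{l}-\bw^{*}\|$ using that $h^{l}$ is affine with linear part $\bG^{l}$, the same construction of $\lambda_{\max}<1$ from $\bG^{l}\rightarrow\bG^{*}$ and a finite prefix, and the same unrolling of $a_{l+1}\leq\lambda_{\max}a_{l}+b_{l}$ followed by the geometric-sequence/DTFT convolution identity. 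All of that is correct and is exactly what the paper does.

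The genuine gap is in the step you yourself flag as the main obstacle: producing $b_{l}=3|\xi_{l}|\|\bu^{\ast}\|+2|\gamma_{l}|/\bar{\beta}^{2}$. The paper does \emph{not} perturb $(\bM^{l})^{-1}$. It reuses the explicit shrinkage-difference computation \eqref{zeta_l} from the proof of Theorem~\ref{convergence_l}, evaluated at the fixed point: in the active-shrinkage regime one writes $s^{l}(\bx)=\bx-\frac{1}{\beta^{l}}\frac{\bx}{\|\bx\|}$, so the mismatch splits into the term $\|(\bD^{l}-\bar{\bD})\bu^{\ast}\|$, a threshold term, and a direction term. The constant $2$ in $2|\gamma_{l}|/\bar{\beta}^{2}$ then comes from the single threshold term via the bound $\beta^{l}\geq\bar{\beta}/2$ (valid for large $l$), not from two separate $\gamma_{l}$ contributions as you claim; and the constant $3$ is $1+2$, where the $2$ comes from bounding the direction term $\frac{1}{\bar{\beta}}\left\|\frac{\bD^{l}\bu^{\ast}}{\|\bD^{l}\bu^{\ast}\|}-\frac{\bar{\bD}\bu^{\ast}}{\|\bar{\bD}\bu^{\ast}\|}\right\|$ using the crucial fact $\|\bar{\bD}\bu^{\ast}\|\geq 1/\bar{\beta}$ (the fixed point sits in the active regime; the inactive case is handled trivially), an ingredient absent from your sketch. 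Your proposed route --- a first-order expansion of $(\bM^{l})^{-1}$ with $\|\bE^{l}\|\leq M$ --- would not terminate at the stated $b_{l}$: an honest bound on $(\bM^{l})^{-1}-(\bM^{\ast})^{-1}$ generates additional $O(|\xi_{l}|)$ and $O(|\gamma_{l}|)$ terms whose constants involve $M$, $\mu$, $\|\bK\|$ and $\|(\bM^{\ast})^{-1}\|$, and these do not collapse to $3|\xi_{l}|\|\bu^{\ast}\|+2|\gamma_{l}|/\bar{\beta}^{2}$. (Indeed the paper sidesteps them only by identifying $h^{l}(\bw^{\ast})$ with $\bD^{l}\bu^{\ast}$, i.e., evaluating the inner solve at the limiting $\bu^{\ast}$.) So while your proof would deliver a bound of the same qualitative form $b_{l}=C_{1}|\xi_{l}|+C_{2}|\gamma_{l}|$, it would not prove the theorem with the constants as stated; for that you need the shrinkage-formula argument together with $\|\bar{\bD}\bu^{\ast}\|\geq 1/\bar{\beta}$.
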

Theorem~\ref{thm:convergence_rate} provides a generic formula for the convergence rate of DECUN; however, it may be hard to pursue unless $B(w)$ is of a tractable form. In the following, we will derive simplified formulas of~\eqref{convergence_rate_w} under two common circumstances. Indeed, as can be seen from \eqref{convergence_rate_w}, the convergence rate depends on two parts. One depends on the spectral radii of $(\mathbf{G}^{l})^{2}$. It is easy to see that $\rho((\mathbf{G}^{l})^{2})\leq\rho((\mathbf{G}^{l}))< 1$ and $(\lambda_{\max})^{l+1}<<1$, since $\bG^l$ is symmetric, postive definite. Another depends on the convergence rate of both $\xi_l$ and $\gamma_l$. When $\xi_l=\xi^{l}$ with $0<\xi<1$ and $\gamma_l=\gamma^{l}$ with $0<\gamma<1$, then $B(w)$ in~\eqref{convergence_rate_w} yields
\begin{equation}
\begin{split}
    B(w)&=\frac{3\|\bu^{\ast}\|}{1-\xi e^{-jw}}+\frac{2}{{(\bar{\beta})}^2}\frac{1}{1-\gamma e^{-jw}}
\end{split}
\end{equation}
so that~\eqref{convergence_rate_w} yields
\begin{equation}
\begin{split}
\label{convergence_rate_w_1}
\vspace{-1mm}
&\|\mathbf{w}^{l+1}-\mathbf{w}^{*}\|\leq(\lambda_{\max})^{l+1}\|\mathbf{w}^{0}-\mathbf{w}^{*}\|\\
&\qquad+\frac{3\|\bu^{\ast}\|}{\xi-\lambda_{\max}}\left(\xi^{l+1}-(\lambda_{\max})^{l+1}\right)\\
&\qquad\qquad+\frac{2}{(\bar{\beta})^{2}(\gamma-\lambda_{\max})}(\gamma^{l+1}-(\lambda_{\max})^{l+1}).
\end{split}
\vspace{-1mm}
\end{equation}
If $\xi_{l}=\left(\frac{1}{l+1}\right)^{p}$, $\gamma_{l}=\left(\frac{1}{l+1}\right)^{p}$ with $p\geq 1$, then \eqref{convergence_rate_w} yields
\begin{equation}\label{convergence_rate_p}
\begin{split}
\vspace{-1mm}
&\|\mathbf{w}^{l+1}-\mathbf{w}^{*}\|\leq(\lambda_{\max})^{l+1}\|\mathbf{w}^{0}-\mathbf{w}^{*}\|\\
&\qquad + \left(3\|\bu^{*}\|+\frac{2}{(\bar{\beta})^{2}}\right)\sum_{i=0}^{l}(\lambda_{\max})^{l-i}\left(\frac{1}{i+1}\right)^{p}.
\vspace{-1mm}
\end{split}
\end{equation}

From~\eqref{convergence_rate_w_1},~\eqref{convergence_rate_p} we observe that the overall convergence rate depends on the decaying rate of $\xi_l$ and $\gamma_l$, and becomes faster when both series vanish quicker; in particular, when they are both geometric series then the convergence rate is linear. It is worth noting that in practice faster decay does not necessarily yield better performance because this merely implies that the speed of converging to a fixed point is high, which does not provide a guarantee that the fixed point is superior. Improved model performance can be obtained by better-chosen parameters which are learned from real datasets through end-to-end training in practice.

In~\cite{wang2008new}, Wang {\it et al.} conduct a similar analysis on the HQS algorithm (Theorem 3.4 and Theorem 3.5). In comparison, unrolled HQS generalizes it by altering the parameters across layers, and falls back to HQS when $\xi_l=0$ and $\gamma_l=0, \forall l$. In addition, when unrolled HQS converges, the convergent point is the same as HQS when using $\bar{\bD}$
and $\bar{\beta}$ as its parameters. In this regard, Theorem~\ref{convergence_l} could be considered as a generalization of Theorem 3.4 in~\cite{wang2008new} and covers a much wider variety of scenarios.
Compared to Theorem 3.5 in~\cite{wang2008new},
Theorem~\ref{thm:convergence_rate} provides a more sophisticated formula
for convergence rate since it deals with more complex cases. The convergence rate of DECUN expectedly is slower than that of traditional HQS, but in practice, it provides significant performance gains even with a small number of layers, as we will demonstrate in Section~\ref{sec:experiemnts}. Our simulation
studies in Section~\ref{ssec:exp_simulation} shed some light on how it simplifies
under common choices of $\xi_l$ and $\gamma_l$.
\vspace{-2mm}
\subsection{Connections to Related Analytical Work in Unrolled Deep Networks}
We compare our method against related works that also conduct theoretical analysis along with developing unrolled network architectures, such as~\cite{chen2018theoretical, liu2019alista, Chun2023}.
In contrasting our asymptotic convergence analyses against that of \cite{chen2018theoretical}, it introduces a partial weight coupling structure as a necessary condition for the convergence guarantee of LISTA, in addition to the sparsity assumption of the underlying signal to be recovered. This structure, primarily aimed at facilitating convergence analysis, might limit the model's capacity by constraining the number of parameters. In comparison, our proposed DECUN's parameterization scheme strikes a balance between analytical rigor and practical merits. It allows the network to maintain its capacity and, consequently, its practical performance, without compromising on the convergence guarantee. Furthermore, while LISTA primarily seeks to enhance the efficiency of sparse coding algorithms and generally does not surpass its analytical counterparts in recovering higher-quality sparse solutions, our experimental studies demonstrate DECUN's effectiveness in recovering latent images of superior quality.

Regarding the work of \cite{Chun2023}, while it also integrates neural networks (INN) into iterative reconstruction algorithms, our work differs significantly in the following aspects: 1) Chun \emph{et al} \cite{Chun2023} construct Momentum-Net by applying the Block Proximal Extrapolated Gradient method using a Majorizer (BPEG-M) framework to solve an iterative model-based image reconstruction (MBIR) problem. They replace the proximal mapping with a learnable deep neural network in a layer-specific manner. In contrast, we construct the entire deep network, named DECUN, by unrolling and reparametrizing the HQS algorithm, and convergence of DECUN is guaranteed by the new parameterization scheme, as detailed in \eqref{eq:parameterization}. While both approaches provide convergence guarantees, theirs relies on technical assumptions that are somewhat challenging to verify prior to executing their algorithm (as these depend on the properties of critical points and fixed points). Our method, however, only requires mild technical assumptions that are generally valid in most practical scenarios. 2) Chun \emph{et al} \cite{Chun2023} do not characterize convergence rate, whereas we derive a detailed formula for the convergence rate in \eqref{convergence_rate_w}. 3) The application domain of \cite{Chun2023} is medical imaging, which differs from our focus on natural image deblurring.

The work in \cite{liu2019alista} is centered on establishing the analytical formula for optimal LISTA weights, showing that these weights offer a convergence rate comparable to those derived from training data. Their work focuses on the sparse coding problem and not focused on deconvolution.
\section{Experimental Validation}\label{sec:experiemnts}

In this section, we first provide numerical results on convergence verification as well as convergence rate calculation, which justifies the convergence guarantee of our proposed method. Then, we compare DECUN with state-of-the-art deblurring techniques through experiments, which illustrates DECUN's superior performance and computational benefits. Via cross-validation, we set the hyperparameter $\mu=5\times10^4$. As the training loss function to learn $\bar{\bD}$, $\bE^l$, $\bar{\beta}$, we use a linear combination of the Mean Squared Error (MSE) together with Mean Absolute Error (MAE), since MSE and MAE are two loss-terms widely used in non-blind image deconvolution literature  \cite{zhang2017learning, ren2018deep}. 
Though $l^1$ norm in MAE loss is non-differentiable at 0, MAE loss has been widely used in neural networks as a training loss function \cite{pandey2019new, zhang2019late}, and many modern machine learning frameworks like PyTorch, have already incorporated support for MAE loss in training \cite{pytorchL1}. In addition, in \cite{zhao2016loss} researchers have conducted thorough experiments to show that MAE loss provides superior results to differentiable alternatives such as MSE loss.
The Adam optimizer \cite{kingma2014adam} is employed to learn the DECUN network parameters. The code and trained models are available at \url{https://github.com/6zhc/DECUN}.

\subsection{Numerical simulation}\label{ssec:exp_simulation}
In order to verify the convergence of our proposed method, we carried out a set of experiments on a $256\times256$ image selected from \cite{Levin2011}. A motion blur kernel $k$ is chosen from \cite{Levin2011}, which is applied to the image with additive Gaussian noise $n$ with zero mean and standard deviation $10^{-5}$.  To further demonstrate the convergence of our proposed method, we define the error value as $\text{Error}= \frac{\|\bw^{l}-\bw^{\ast}\|}{\|\bw^{\ast}\|}$, and calculate how this value change with respect to iteration $l$ from 1 to 30, where $\|\bw^{\ast}\|$ is the fixed point obtained by executing 500 iterations. 
\par First of all, in order to illustrate the convergence property of our proposed method, we set the number of filters $C = 2$, and the parameters 
$\xi_{l}$ and $\gamma_{l}$ as two forms of diminishing sequences: in one form, we take them to be exponential series, i.e., we choose $\xi_{l}=\xi^{l}$ and $\gamma_{l}=\gamma^{l}$ with $0<\xi<1$ and $0<\gamma<1$; in another form, we select the $p$-series as $\xi_{l}=\left(\frac{1}{l}\right)^{p}$ and $\gamma_{l}=\left(\frac{1}{l}\right)^{p}$ with $p\geq1$. In comparison, we choose random sequences for both of them, in which both $\xi_{l}$ and $\gamma_{l}$ satisfy Gaussian random distribution $\xi_{l}\sim \mathcal{N}(0,\sigma_{\xi_{l}}^{2})$ and $\gamma_{l}\sim \mathcal{N}(0,\sigma_{\beta_{l}}^{2})$, where $\sigma_{\xi_{l}}=\sigma_{\beta_{l}}=l/60$. The convergence comparison results are shown in Fig. \ref{Comparison}.
As can be seen from Fig. \ref{Comparison}, it is obvious that when the parameters are in converging forms (\emph{e.g.} $\xi_{l}=\xi^{l}$ and $\gamma_{l}=\gamma^{l}$ with $0<\xi<1$ and $0<\gamma<1$; and $\xi_{l}=\left(\frac{1}{l}\right)^{p}$ and $\gamma_{l}=\left(\frac{1}{l}\right)^{p}$ with $p\geq1$), the error value $\text{Error}= \frac{\|\bw^{l}-\bw^{\ast}\|}{\|\bw^{\ast}\|}$ is convergent with respect to the iteration $l$. In contrast, the error value $\text{Error}= \frac{\|\bw^{l}-\bw^{\ast}\|}{\|\bw^{\ast}\|}$ is divergent when the parameters $\xi_{l}$ and $\gamma_{l}$ are chosen to be divergent (\emph{e.g.} $\xi_{l}\sim \mathcal{N}(0,\sigma_{\xi_{l}}^{2})$ and $\gamma_{l}\sim \mathcal{N}(0,\sigma_{\beta_{l}}^{2})$), which confirms our claim that convergence is guaranteed when $\xi_{l}$ and $\gamma_{l}$ are both convergent, and the method is divergent otherwise. 

\begin{figure}[!htbp]
\centering
\includegraphics[width=2.5in]{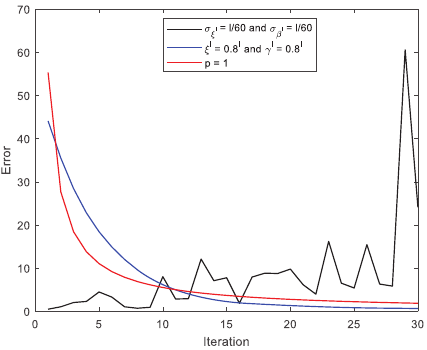}
\caption{Convergence comparison of our proposed method with three different types of parameter settings}
\label{Comparison}
\end{figure}
\par Next, we take the logarithm of $\text{Error}=\frac{\|\bw^{l}-\bw^{\ast}\|}{\|\bw^{\ast}\|}$ as $\log(\text{Error})$ to further speculate the convergence rate of our proposed method. The numerical results of both error values $\text{Error}=\frac{\|\bw^{l}-\bw^{\ast}\|}{\|\bw^{\ast}\|}$ and $\log(\text{Error})$ value are shown in Fig. \ref{Simulation1}, where the parameters $\xi_{l}$ and $\gamma_{l}$ are set to be the form $\xi_{l}=\xi^{l}$ and $\gamma_{l}=\gamma^{l}$ with $0<\xi<1$ and $0<\gamma<1$. As can be seen from Fig. \ref{Simulation1}. (a), when the parameter $\xi$ and $\gamma$ change from $0.2$ to $0.8$ with the step of $0.2$, both error values $\text{Error}=\frac{\|\bw^{l}-\bw^{\ast}\|}{\|\bw^{\ast}\|}$ and $\log(\text{Error})$ converge with respect to iteration $l$. Furthermore, Fig. \ref{Simulation1}(b) shows that the smaller the parameters $\xi$ and $\gamma$ are, 
the faster the convergence speed of error values $\text{Error}=\frac{\|\bw^{l}-\bw^{\ast}\|}{\|\bw^{\ast}\|}$ are.
\begin{figure}[!htbp]
\centering
\includegraphics[width=2.8in]{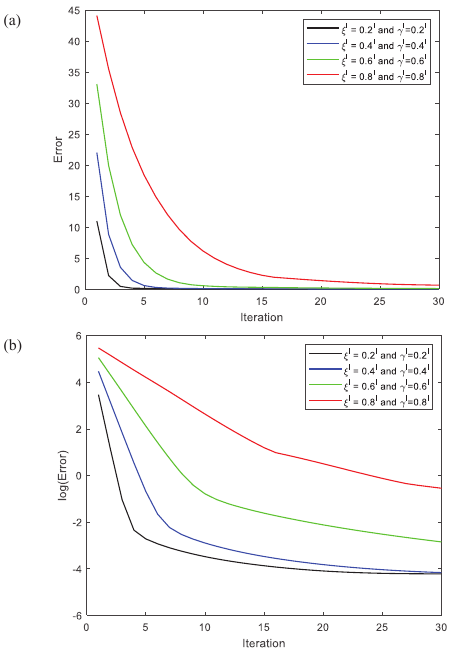}
\caption{(a) $\text{Error}=\frac{\|\bw^{l}-\bw^{\ast}\|}{\|\bw^{\ast}\|}$ with parameters $\xi=0.2,0.4,0.6~\text{and}~0.8$ and $\gamma=0.2,0.4,0.6~\text{and}~0.8$~(b) $\log(\text{Error})$ with parameters $\xi=0.2,0.4,0.6~\text{and}~0.8$ and $\gamma=0.2,0.4,0.6~\text{and}~0.8$}
\label{Simulation1}
\end{figure}
\par In addition, we also choose the parameters $\xi_{l}$ and $\gamma_{l}$ as the form $\xi_{l}=\left(\frac{1}{l+1}\right)^{p}$ and $\gamma_{l}=\left(\frac{1}{l+1}\right)^{p}$ with $p\geq1$ satisfying $\lim_{l\rightarrow+\infty}\left(\frac{1}{l+1}\right)^{p}=0$. To show convergence of our proposed method, we calculate both error values $\text{Error}=\frac{\|\bw^{l}-\bw^{\ast}\|}{\|\bw^{\ast}\|}$ and $\log(\text{Error})$ with $p=1,2,3~\text{and}~4$. The numerical results in this case are illustrated in Fig. \ref{Simulation2}. 
As shown in Fig. \ref{Simulation2}. (a), both error values $\text{Error}=\frac{\|\bw^{l}-\bw^{\ast}\|}{\|\bw^{\ast}\|}$ and $\log(\text{Error})$ are convergent with respect to iteration $l$ for $p=1,2,3~\text{and}~4$. The plot in Fig. \ref{Simulation2}~(b) clearly illustrates that the convergence speed of error value $\text{Error}=\frac{\|\bw^{l}-\bw^{\ast}\|}{\|\bw^{\ast}\|}$ is much faster for larger parameter $p$.
\begin{figure}[!htbp]
\centering
\includegraphics[width=2.8in]{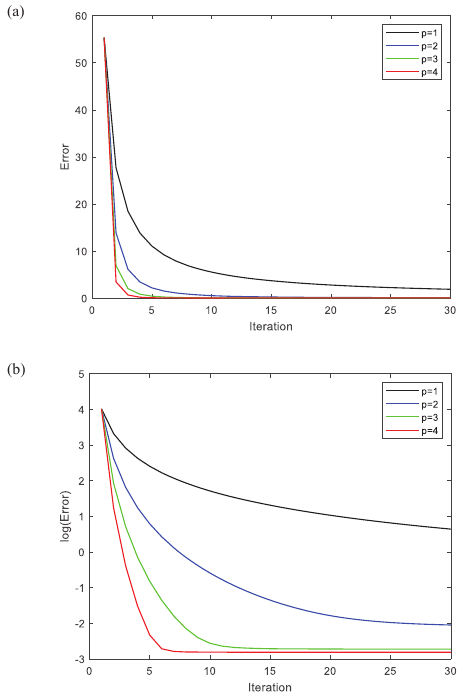}
\caption{(a) $\text{Error}=\frac{\|\bw^{l}-\bw^{\ast}\|}{\|\bw^{\ast}\|}$ with parameter $p=1,2,3~\text{and}~4$~(b) $\log(\text{Error})$ with parameter $p=1,2,3~\text{and}~4$}
\label{Simulation2}
\vspace{-2mm}
\end{figure}
\subsection{Experiments on Benchmark Image Datasets}
\subsubsection{Training and Test Datasets Setup}
\begin{itemize}



\item \textbf{Training dataset for Linear Motion Kernels:}
Microsoft Common Objects in Context (MS COCO) \cite{lin2014microsoft} is a large-scale detection, segmentation, and captioning dataset which consists of 328K images. We randomly selected 800 images from it and cropped them into $424\times424$ pixels. 300 linear motion kernels are generated by randomly choosing lengths from 0 to 20 pixels and angles between 0 and $\pi$, followed by 2D spatial interpolation. Each clear image is convolved with every blur kernel (followed by the addition of Gaussian noise with 1\% standard deviation) resulting in $800\times300$ training image pairs. 
\item \textbf{Testing dataset for Linear Motion Kernels:}
Berkeley Segmentation Dataset 500 (BSD500), which consists of 500 natural images, was used to test the trained model. 100 images from BSD500 were chosen and 10 motion linear kernels that are non-overlapping with training kernels were applied along with noise addition to generate the $100 \times 10$ test images. 

\item \textbf{Training dataset for General Motion Kernels:}
As before, we selected 800 images from MS COCO and cropped them into $424\times424$ pixels. For blur kernels, we generated 10000 nonlinear kernels by recording camera motion trajectories with the method proposed in \cite{kohler2012recording}. For each step in the training stage, we randomly chose a batch of clear images and blur kernels from 800 clear images and 10000 nonlinear kernels, convolved each clear image with a blur kernel, and added Gaussian noise to the blurry images with 1\% noise standard deviation.

\item \textbf{Testing dataset for General Motion Kernels:}
 100 images from BSD500 were chosen along with 100 nonlinear kernels (non-overlapping with training). Furthermore, to demonstrate the versatility and generalizability of the model across different datasets, we also chose 300 images from the VOC2012\cite{pascal-voc-2012} along with another 300 nonlinear kernels and cropped them into $288 \times 456$ pixels.  Convolution with blur kernels followed by noise addition provides our test set.

\end{itemize}

Note that, we purposely use training and testing imagery from distinct datasets to examine how well competing methods perform, i.e. generalize, across datasets.

\subsubsection{Quantitative Performance Measures}

To quantitatively assess the performance of various methods in different scenarios, we use the Peak-Signal-to-Noise Ratio (PSNR) and Structural Similarity Index (SSIM) \cite{1284395}.

\subsubsection{Number of Parameters}

In our investigation, we examine the impact of varying layer counts on network performance. This analysis will be detailed further in the ablation study (Section IV.C). Referring to Table \ref{tab2}, we observe the network's behavior with differing layer quantities. Consequently, for the DECUN network's comparison with SOTA, we designate the number of layers $L$ as 30 and the number of filters $C$ as 4.
Regarding the number of parameters within the DECUN framework, the total count of trainable layer-specific filters across the entirety of the network amounts to $C \times L$. Moreover, the network comprises one additional trainable filter, the filter $\bar{\mathbf{D}}$, and a trainable parameter, the relaxation strength $\bar{\beta}$. Summing up, the comprehensive tally of trainable parameters in the DECUN network totals $(C \times L + C) \times S_{\text{filter}} + 1$, where $S_{\text{filter}}$ is the size of the filter. The DECUN framework, employed for SOTA analysis, features 4 trainable filters with the size of $3\times 3$ in each of its 30 layers. Therefore, the total number of parameters is calculated as $(30\times4 +4) \times (3\times3) +1 = 1117$. Compared with state-of-the-art methods e.g., MoDL \cite{aggarwal2018modl} with $188\text{K}$ parameters used, DWDN\cite{dong2021dwdn} deployed with $7.05\text{M}$ parameters and INFWIDE~\cite{zhang2023infwide} that uses $15.74\text{M}$ parameters, the proposed DECUN employs the fewest parameters, leading to significant computational savings.
\vspace{-2mm}
\subsection{Ablation Study}

To provide insights into the design of our network, we first carry out a series of experimental studies to investigate the influence of three key design factors: 1.) the filter sequence type among all the layers, and 2.) the number of layers $L$. 
To train the models, 800 cropped images are chosen from MS COCO dataset and 10000 nonlinear kernels were generated with the method proposed in \cite{kohler2012recording}. For each step in the training stage, we randomly chose a batch of clear images and blur kernels, convolved each clear image with a blur kernel, and added Gaussian noise to the blurry images with a 1\% std. In the testing stage,  200 images from MS COCO and 200 nonlinear kernels non-overlapping with training were used for the ablation study results presented next. 

We compare the proposed DECUN with traditional HQS and Unrolled Layer specific learning (U-LSL). All three methods are designed with 10 layers and 2 filters in each layer. In this case, the main difference between the three methods is the relationship of filters between each layer. For traditional HQS, each layer's filters are always horizontal and vertical image gradient filters and $\beta$ is fixed too. For DECUN, we follow the proposed parameterization in (\ref{eq:parameterization}) with $\xi_l=0.5^l$ and $\gamma_l=0.5^l$. Then U-LSL is the deep unrolled HQS network with independent layer-specific learning of filters (and $\beta^{l}$), which leads to the least restrictions. Table \ref{tab1} summarizes the numerical scores corresponding to these three methods. Both U-LSL and DECUN benefit from training data-inspired deep learning of parameters. 
U-LSL performs the best unsurprisingly as it affords the most modeling power. DECUN performs much better than the traditional HQS and only slightly worse than U-LSL. In all future experiments, we will stick to DECUN comparisons with state-of-the-art.

\begin{table}[htbp]
\caption{Ablation study showing benefits of deep unrolled networks -- convergent and otherwise over traditional HQS.}
\centering
\begin{tabular}{|c|c|c|c|}
\hline
 & \textbf{Traditional HQS} 
   & \textbf{DECUN } 
 & \textbf{U-LSL} \\ 
\hline
SSIM & 0.8264
    & 0.9177
    & \textbf{0.9187}
                 \\ 
                 \hline
PSNR & 26.4545
    & 30.3508
    & \textbf{30.4786}
                \\ 
                \hline
\end{tabular}
\label{tab1}
\end{table}

We next study the effects of different numbers of layers $L$. The network performance over different choices of the network layer number is summarized in Table \ref{tab2} with the number of filters $C$ set to 4. As the network layer number increases, the network performance clearly improves but the rate of improvement decreases, which is consistent with the common observation that deeper networks tend to perform better. We observe diminishing performance gains after 30 layers. For results reported next, we set $L = 30$ and $C=4$.

\begin{table}[htbp]
  \caption{DECUN performance w.r.t number of layers.}
\centering
        \begin{tabular}{|c|c|c|c|c|c|c|}
    \hline
    & \textbf{10 layers} & \textbf{15 layers} & \textbf{20 layers} & \textbf{25 layers} & \textbf{30 layers} 
    \\ \hline
    SSIM & 0.9141
    & 0.9260
    & 0.9288
    & 0.9298
    & \textbf{0.9313}
   \\ \hline
    PSNR & 30.3956
    & 31.1737
    & 31.4070
    & 31.4660
    & \textbf{31.6179}
    \\ \hline
    \end{tabular}
\label{tab2}
\end{table}

\begin{figure*}[t]
    \centering
    \begin{subfigure}[t]{0.084\textwidth}
        \centering
        \includegraphics[width=\textwidth,trim=22 22 22 22,clip]{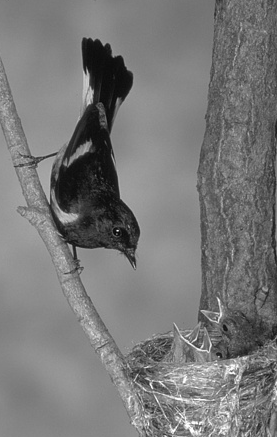}
        \includegraphics[width=\textwidth,trim=22 22 22 22,clip]{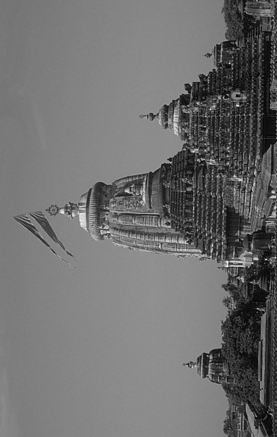}
        \caption{Clear}
    \end{subfigure}%
    \begin{subfigure}[t]{0.084\textwidth}
        \centering
        \includegraphics[width=\textwidth]{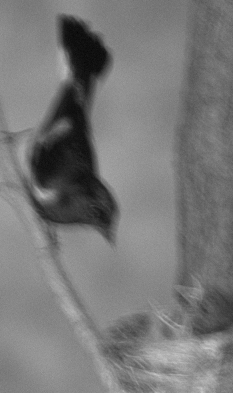}
        \includegraphics[width=\textwidth]{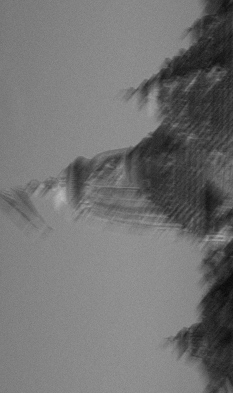}
        \caption{Blurred}
    \end{subfigure}%
    \centering
    \begin{subfigure}[t]{0.084\textwidth}
        \centering
        \includegraphics[width=\textwidth,trim=66 66 66 66,clip]{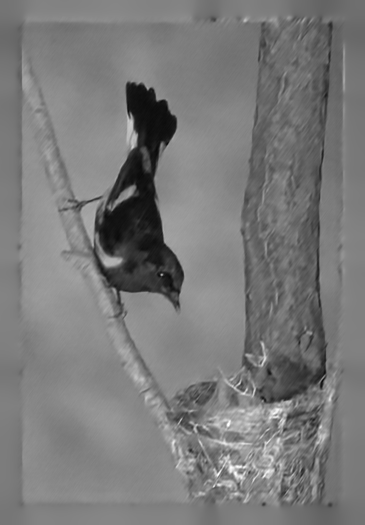}
        \includegraphics[width=\textwidth,trim=66 66 66 66,clip]{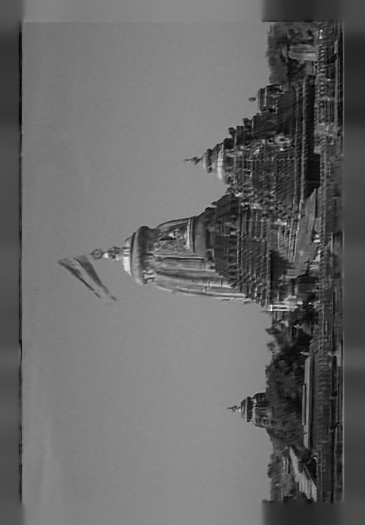}
        \caption{EPLL}
    \end{subfigure}%
    \begin{subfigure}[t]{0.084\textwidth}
        \centering
        \includegraphics[width=\textwidth,trim=22 22 22 22,clip]{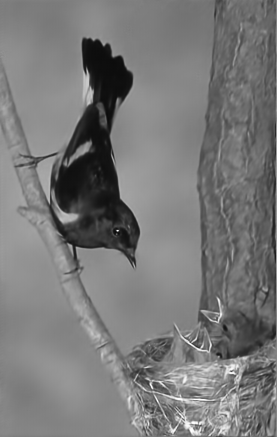}
        \includegraphics[width=\textwidth,trim=22 22 22 22,clip]{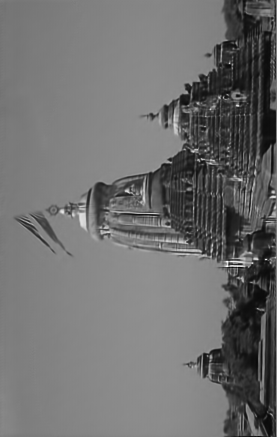}
        \caption{MLP}
    \end{subfigure}%
    \centering
    \begin{subfigure}[t]{0.084\textwidth}
        \centering
        \includegraphics[width=\textwidth,trim=44 44 44 44,clip]{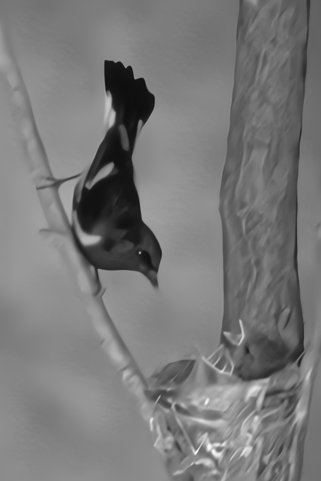}
        \includegraphics[width=\textwidth,trim=44 44 44 44,clip]{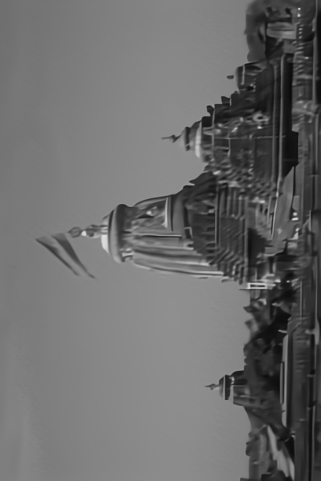}
        \caption{IRCNN}
    \end{subfigure}%
    \begin{subfigure}[t]{0.084\textwidth}
        \centering
        \includegraphics[width=\textwidth,trim=44 44 44 44,clip]{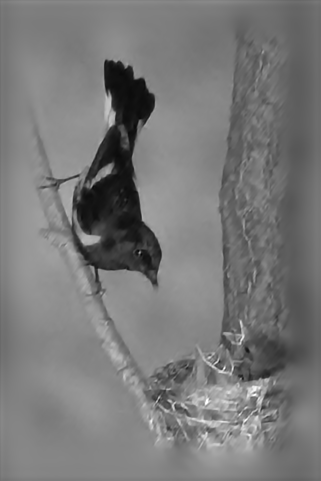}
        \includegraphics[width=\textwidth,trim=44 44 44 44,clip]{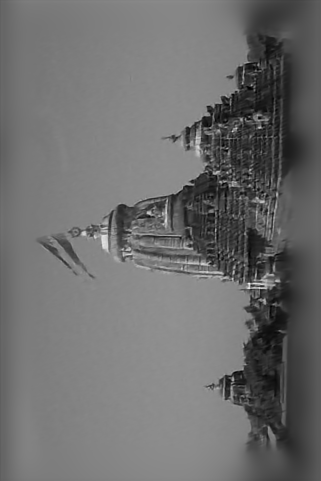}
        \caption{FCNN}
    \end{subfigure}%
    \centering
    \begin{subfigure}[t]{0.084\textwidth}
        \centering
        \includegraphics[width=\textwidth,trim=22 22 22 22,clip]{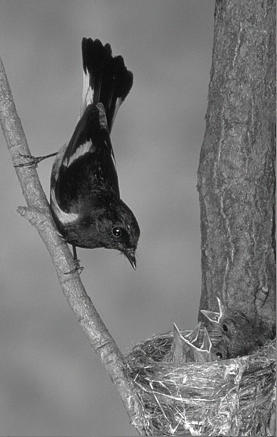}
        \includegraphics[width=\textwidth,trim=22 22 22 22,clip]{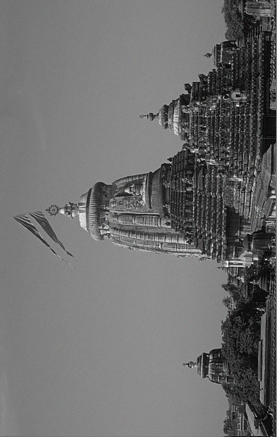}
        \caption{FNBD}
    \end{subfigure}%
    \begin{subfigure}[t]{0.084\textwidth}
        \centering
        \includegraphics[width=\textwidth,trim=44 44 44 44,clip]{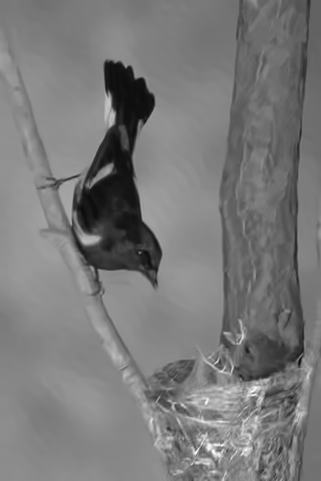}
        \includegraphics[width=\textwidth,trim=44 44 44 44,clip]{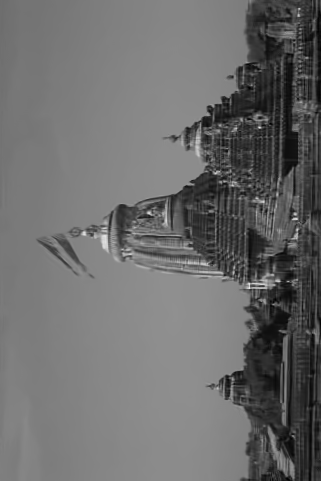}
        \caption{BM3D}
    \end{subfigure}%
    \begin{subfigure}[t]{0.084\textwidth}
        \centering
        \includegraphics[width=\textwidth,trim=22 22 22 22,clip]{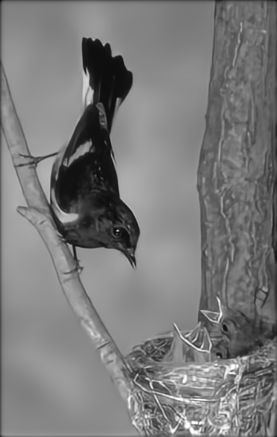}
        \includegraphics[width=\textwidth,trim=22 22 22 22,clip]{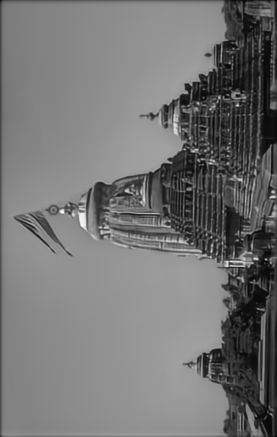}
        \caption{INFWIDE}
    \end{subfigure}%
    \centering
    \begin{subfigure}[t]{0.084\textwidth}
        \centering
        \includegraphics[width=\textwidth,trim=44 44 44 44,clip]{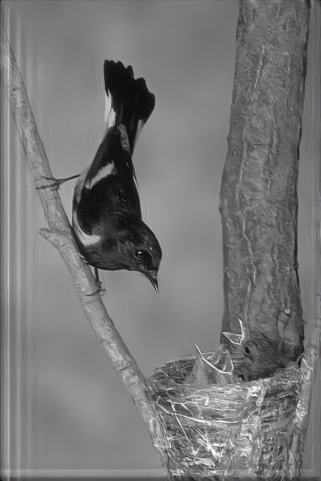}
        \includegraphics[width=\textwidth,trim=44 44 44 44,clip]{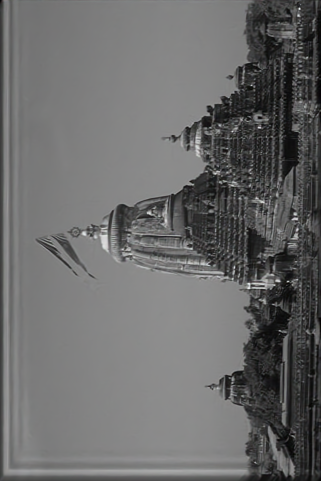}
        \caption{SVMAP}
    \end{subfigure}%
    \begin{subfigure}[t]{0.084\textwidth}
        \centering
        \includegraphics[width=\textwidth,trim=22 22 22 22,clip]{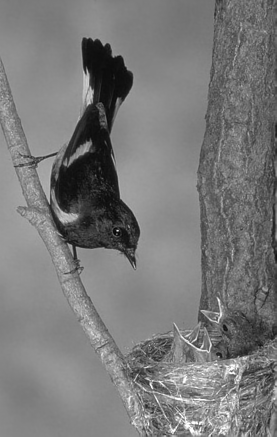}
        \includegraphics[width=\textwidth,trim=22 22 22 22,clip]{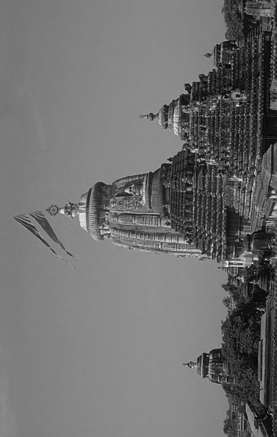}
        \caption{DWDN}
    \end{subfigure}%
    \begin{subfigure}[t]{0.084\textwidth}
        \centering
        \includegraphics[width=\textwidth,trim=44 44 44 44,clip]{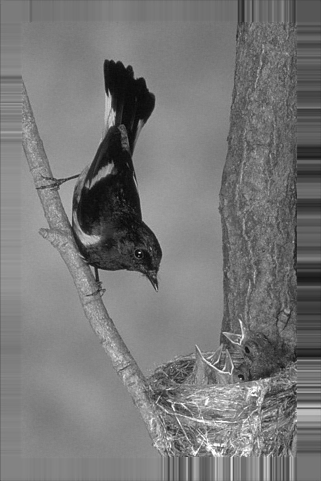}
        \includegraphics[width=\textwidth,trim=44 44 44 44,clip]{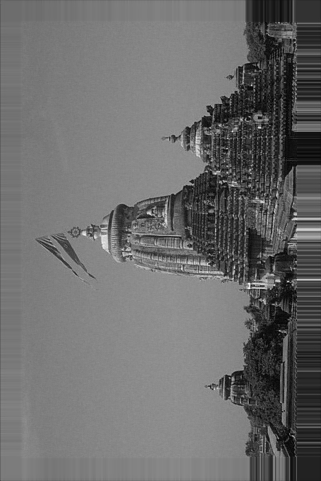}
        \caption{DECUN}
    \end{subfigure}

\caption{Visual comparisons over BSD dataset using linear kernels.}  
\label{fig5}
\end{figure*}
\vspace{-2mm}
\subsection{Comparisons Against State of the Art}

We select ten state-of-the-art methods to compare against. Among them, EPLL~\cite{zoran2011learning}, BM3D~\cite{makinen2020collaborative}, SVMAP~\cite{dong2021learning}, INFWIDE~\cite{zhang2023infwide}, MoDL \cite{aggarwal2018modl}, and DWDN\cite{dong2021dwdn} are iterative algorithms, some of which have learned components. MLP~\cite{schuler2013machine}, IRCNN~\cite{zhang2017}, FCNN~\cite{zhang2017learning}, and FNBD~\cite{son2017fast} are fully deep learning methods. The iterative algorithms are interpretable, and convergent but suffer from long execution time. In contrast, the deep learning methods could potentially achieve superior performance in image quality measures (when training is sufficient) but lack interpretability. We will demonstrate that DECUN combines the advantages of both categories. 

\subsubsection{Evaluation on Trajectory Linear Kernels}

Fig \ref{fig5} shows sample test images. The scores of various methods are presented in Table \ref{tab4}. DECUN emerges overall as the best method, as evidenced by its faithful preservation of high-frequency textures and details. EPLL uses patch level  Maximum a-Posteriori estimates to reconstruct images from learned natural image patch statistics, which is time-consuming but performs the best among all the iterative algorithms. BM3D emerges as a generic denoising method and is later adopted into the deblurring domain. However, the collaborative filtering process might eliminate important high-frequency details in the images.
SVMAP uses the neural network to predict per-pixel spatially-variant features and, together with the maximum a-posteriori framework, keeps the most texture feature and has the best performance among all the other states of the art. Among deep learning-based approaches, FCNN performs second best, and FNBD performs mildly worse than FCNN, as they share similar ideas. Specifically, FCNN uses fast convolutional neural networks to remove the artifacts produced by the Wiener filter, whereas FNBD does it via a residual network with long/short skip connections. MLP uses a regularized inversion of the blur kernel in the Fourier domain as a first step and then applies a denoising network to suppress the high-frequency artifacts. 
IRCNN performs denoising before feeding the images into a deblurring optimization method, which may potentially result in loss of textures. INFWIDE\cite{zhang2023infwide} introduces a dual-path structure that specifically eliminates noise and reconstructs overexposed areas within the visual domain, but for images that only have Gaussian noise added, this structure may not be able to fully exhibit its entire range of capabilities. DWDN \cite{dong2021dwdn} employed a U-Net structure refinement module to predict the deblurred image from the deconvolved deep features. This approach necessitated that the dimensions of the image be multiples of 8. To adapt DWDN for use with the BSD dataset, which has images of dimensions $321 \times 481$ pixels, we implemented a sliding window technique. Although DWDN demonstrates exceptional performance in image deblurring, the application of the sliding window method resulted in visible segmentation boundaries in some of the deblurred images.

\begin{table*}[ht]
\caption{Quantitative comparison over BSD dataset and linear kernels. The best score is in bold fonts.}
\centering
\scalebox{0.91}{
\begin{tabular}{|c|c|c|c|c|c|c|c|c|c|c|}
    \hline
    &\textbf{EPLL}\cite{zoran2011learning} & \textbf{MLP}\cite{schuler2013machine} & \textbf{IRCNN}\cite{zhang2017} & \textbf{FCNN}\cite{zhang2017learning} & \textbf{FNBD}\cite{son2017fast}   & \textbf{BM3D} \cite{makinen2020collaborative} & \textbf{INFWIDE}\cite{zhang2023infwide} & \textbf{SVMAP}\cite{dong2021learning} & \textbf{DWDN}\cite{dong2021dwdn} &  
    \textbf{DECUN} \\ \hline
    SSIM 
    & 0.8644
    & 0.8268
    & 0.8460
    & 0.9105
    & 0.9003
    & 0.8532
    & 0.8700
    & 0.9155
    & 0.9117
    & \textbf{0.9169}
          \\ \hline
    PSNR 
    & 28.0313
    & 26.6674
    & 27.6833
    & 30.9068
    & 31.2710
    & 29.0598
    & 26.3423
    & 31.8671
    & 31.6972
    & \textbf{33.2427}
        \\ \hline
    \end{tabular}}
\label{tab4}
\end{table*}

\subsubsection{Evaluation on Trajectory Nonlinear Kernels}

Sample test images and visual examples of reconstructed images from different methods are shown in Fig \ref{fig6}. Scores of all the methods are summarized in Table \ref{tab5}. The deblurring problem on nonlinear kernels is much more challenging than linear kernels due to their significantly higher diversity. Therefore, the overall scores on nonlinear kernels are typically lower than those of linear kernels but generally follow the same trend. Again DECUN outperforms all competing methods and better reconstructs high-frequency details, followed by FCNN, and EPLL ranks as the top-performing iterative algorithm. Among other methods, SVMAP preserves textures well thanks to its per-pixel spatially-variant features prediction, but still underperforms DECUN. 
Since the quantitative score of EPLL, FCNN, SVMAP, and DECUN is close, we zoomed in on the elephant region and compared four processed regions in Fig \ref{fig10}. Due to patch-level processing, there are clear boundaries between patch and patch in the images processed by the EPLL. The image processed by FCNN may lack contrast in some regions, such as the grass in the elephant feet region. SVMAP processed the images to the smoothest image among all four images, but it also led to losing too much detail. Overall the image processed by DECUN is closest to the sharp image and performs best in detail among all the methods. DWDN\cite{dong2021dwdn} use multi-scale feature refinement component to progressively restore fine-scale detail from the deconvolved features, which required the dimensions of the input image to be multiples of 8. Therefore, we cropped the image into  $288 \times 456$ pixels and compared the SOTA including the DWDN with DECUN. Visual representations of the reconstructed images, originating from blurred versions created using the VOC2012 Dataset, can be viewed in Fig \ref{fig11}. The score of all methods over the VOC dataset and nonlinear kernels is detailed in Table \ref{tab7}. The multi-scale feature refinement module enables DWDN to restore image details and sharpen text, achieving good performance in metrics. However, noticeable artifacts appear on some images, as illustrated by the first row of images in Fig \ref{fig11}. Furthermore, we enhance the SOTA by incorporating the MoDL\cite{aggarwal2018modl} framework. This is achieved by adapting the measurement operator $\textbf{A}$, which converts image data into the measurement domain in \cite{aggarwal2018modl}, into a Toeplitz matrix $\bK$ that aligns with the convolution kernel $\bk$, based on \eqref{eq:linear_model}.  From both visual cues and quantitative scores, it's evident that DECUN consistently aligns closest to the original sharp images, standing out as the method that most faithfully preserves detail across varied datasets.

\begin{figure*}[htbp]
    \centering
    \begin{subfigure}[t]{0.084\textwidth}
        \centering
        \includegraphics[width=\textwidth,trim=22 22 22 22,clip]{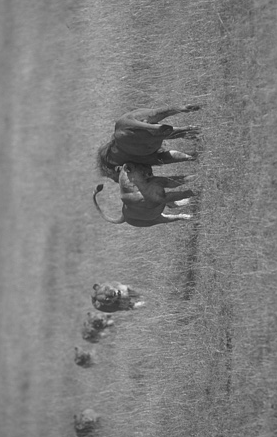}
        \includegraphics[width=\textwidth,trim=22 22 22 22,clip]{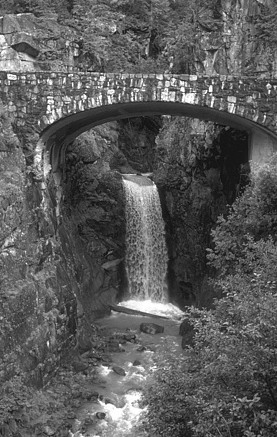}
        \caption{Clear}
    \end{subfigure}%
    \begin{subfigure}[t]{0.084\textwidth}
        \centering
        \includegraphics[width=\textwidth]{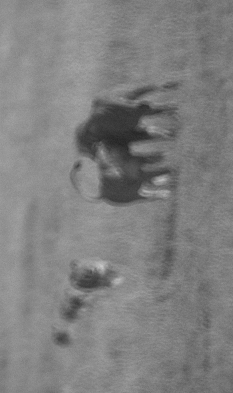}
        \includegraphics[width=\textwidth]{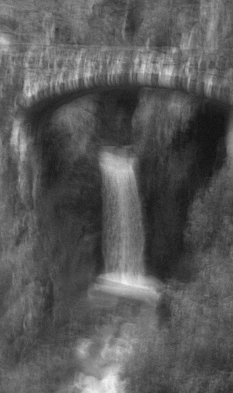}
        \caption{Blurred}
    \end{subfigure}%
    \centering
    \begin{subfigure}[t]{0.084\textwidth}
        \centering
        \includegraphics[width=\textwidth,trim=66 66 66 66,clip]{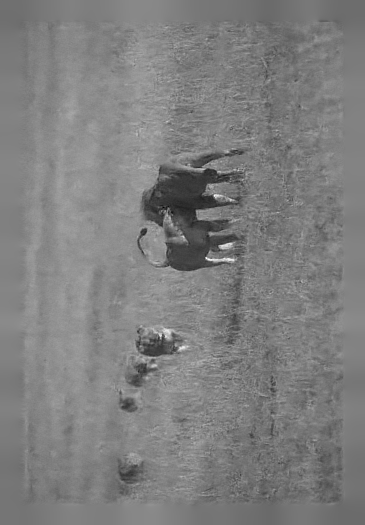}
        \includegraphics[width=\textwidth,trim=66 66 66 66,clip]{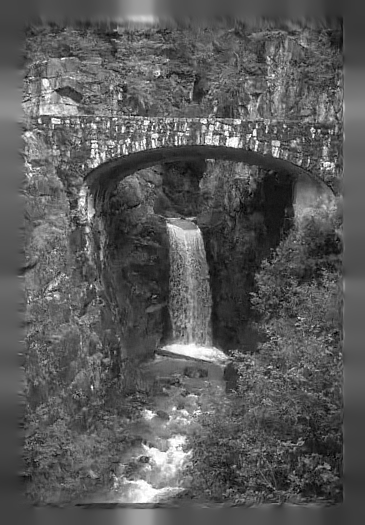}
        \caption{EPLL}
    \end{subfigure}%
    \begin{subfigure}[t]{0.084\textwidth}
        \centering
        \includegraphics[width=\textwidth,trim=22 22 22 22,clip]{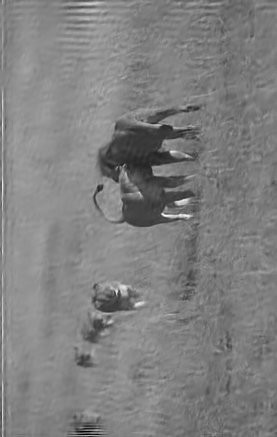}
        \includegraphics[width=\textwidth,trim=22 22 22 22,clip]{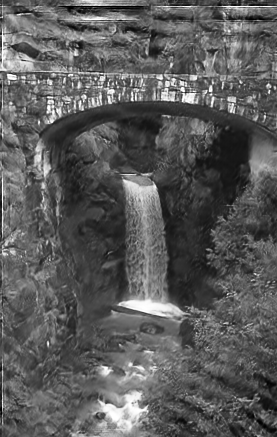}
        \caption{MLP}
    \end{subfigure}%
    \centering
    \begin{subfigure}[t]{0.084\textwidth}
        \centering
        \includegraphics[width=\textwidth,trim=44 44 44 44,clip]{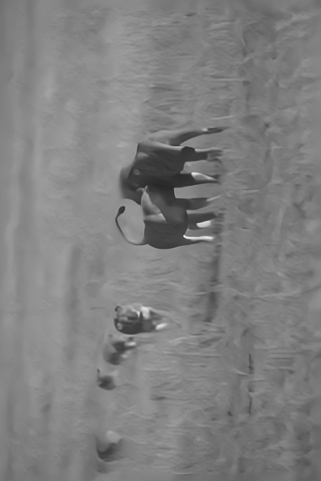}
        \includegraphics[width=\textwidth,trim=44 44 44 44,clip]{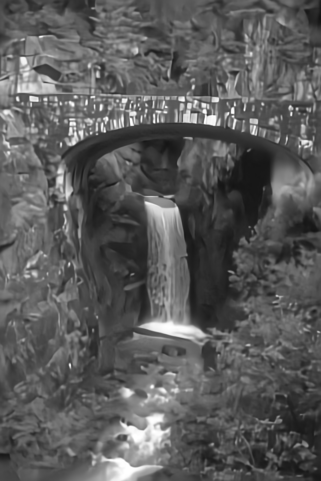}
        \caption{IRCNN}
    \end{subfigure}%
    \begin{subfigure}[t]{0.084\textwidth}
        \centering
        \includegraphics[width=\textwidth,trim=44 44 44 44,clip]{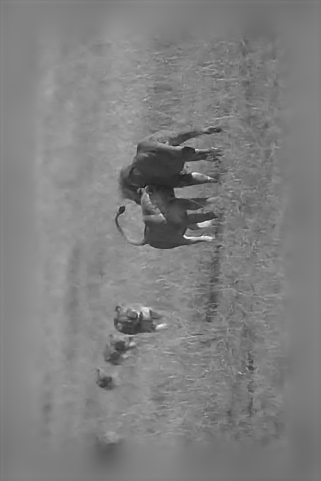}
        \includegraphics[width=\textwidth,trim=44 44 44 44,clip]{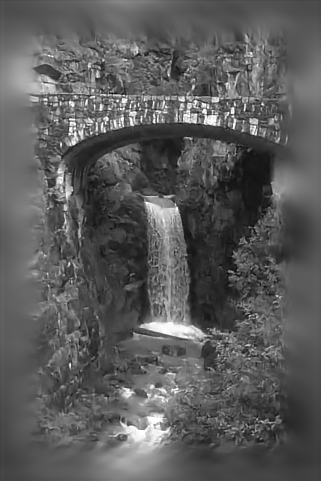}
        \caption{FCNN}
    \end{subfigure}%
    \centering
    \begin{subfigure}[t]{0.084\textwidth}
        \centering
        \includegraphics[width=\textwidth,trim=22 22 22 22,clip]{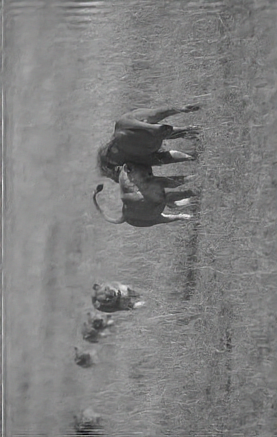}
        \includegraphics[width=\textwidth,trim=22 22 22 22,clip]{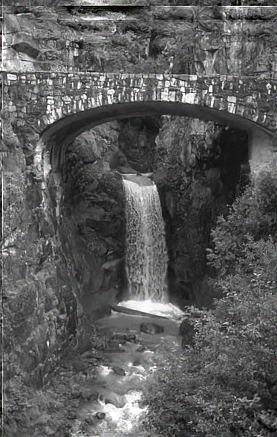}
        \caption{FNBD}
    \end{subfigure}%
    \begin{subfigure}[t]{0.084\textwidth}
        \centering
        \includegraphics[width=\textwidth,trim=44 44 44 44,clip]{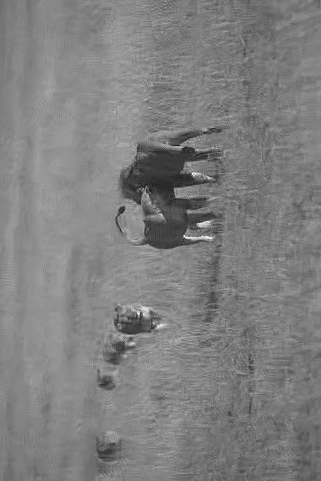}
        \includegraphics[width=\textwidth,trim=44 44 44 44,clip]{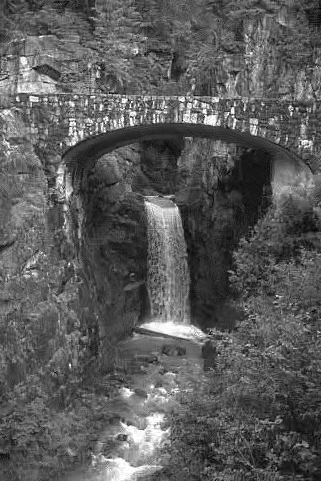}
        \caption{BM3D}
    \end{subfigure}%
    \begin{subfigure}[t]{0.084\textwidth}
        \centering
        \includegraphics[width=\textwidth,trim=22 22 22 22,clip]{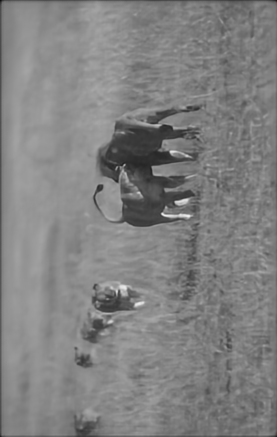}
        \includegraphics[width=\textwidth,trim=22 22 22 22,clip]{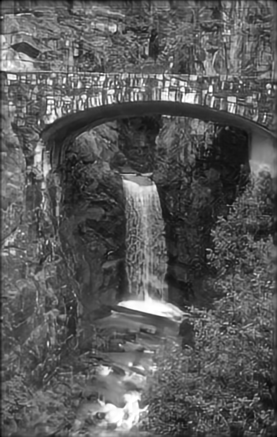}
        \caption{INFWIDE}
    \end{subfigure}%
    \centering
    \begin{subfigure}[t]{0.084\textwidth}
        \centering
        \includegraphics[width=\textwidth,trim=44 44 44 44,clip]{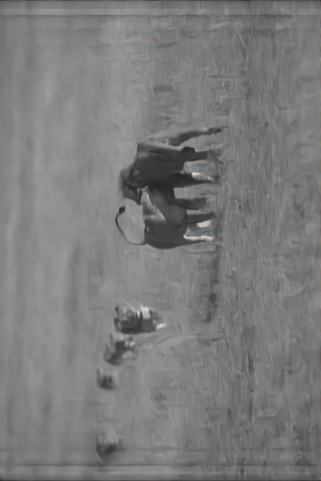}
        \includegraphics[width=\textwidth,trim=44 44 44 44,clip]{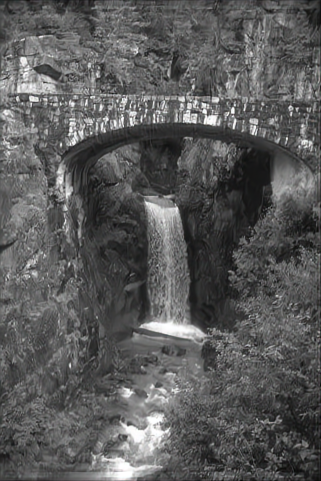}
        \caption{SVMAP}
    \end{subfigure}%
    \begin{subfigure}[t]{0.084\textwidth}
        \centering
        \includegraphics[width=\textwidth,trim=22 22 22 22,clip]{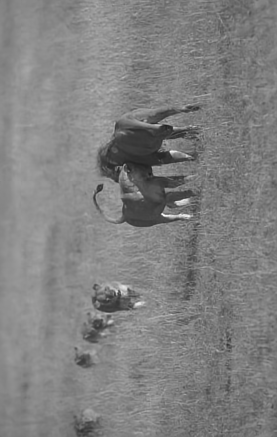}
        \includegraphics[width=\textwidth,trim=22 22 22 22,clip]{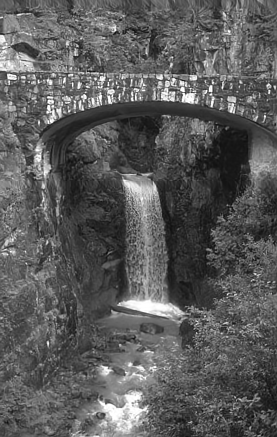}
        \caption{DWDN}
    \end{subfigure}%
    \begin{subfigure}[t]{0.084\textwidth}
        \centering
        \includegraphics[width=\textwidth,trim=44 44 44 44,clip]{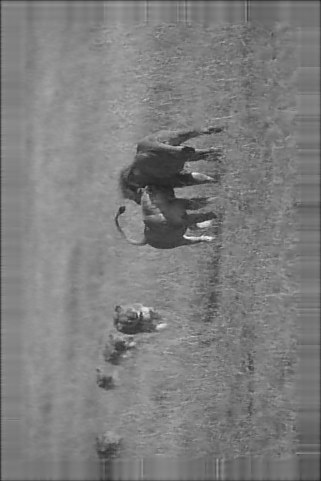}
        \includegraphics[width=\textwidth,trim=44 44 44 44,clip]{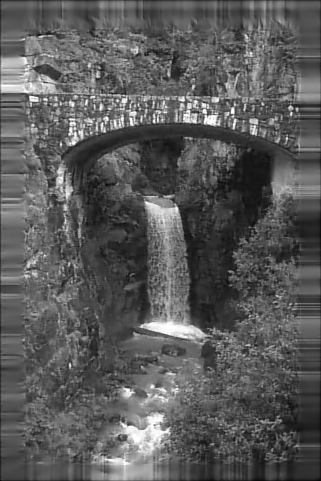}
        \caption{DECUN}
    \end{subfigure}  
\caption{Visual comparison over BSD dataset and nonlinear kernels.}  

\label{fig6}
\end{figure*}

\begin{table*}[htbp]
\caption{Quantitative comparison over BSD dataset and nonlinear kernels. The best score is in bold fonts.}
\centering
\scalebox{0.91}{
\begin{tabular}{|c|c|c|c|c|c|c|c|c|c|c|}
    \hline
    & \textbf{EPLL}\cite{zoran2011learning} & \textbf{MLP}\cite{schuler2013machine} & \textbf{IRCNN}\cite{zhang2017} & \textbf{FCNN}\cite{zhang2017learning} & \textbf{FNBD}\cite{son2017fast}   & \textbf{BM3D} \cite{makinen2020collaborative} & \textbf{INFWIDE}\cite{zhang2023infwide} & \textbf{SVMAP}\cite{dong2021learning} &  
    \textbf{DWDN}\cite{dong2021dwdn} &  
    \textbf{DECUN} \\ \hline
    SSIM 
    & 0.8812
    & 0.8329
    & 0.8309
    & 0.8898
    & 0.8793
    & 0.8586
    & 0.8782
    & 0.8880
    & 0.8903
    & \textbf{0.8985}
        \\ \hline
    PSNR 
    & 29.3769
    & 26.9809
    & 28.3595
    & 29.8539
    & 29.5855
    & 28.5916
    & 28.3830
    & 28.7086
    & 30.2208
    & \textbf{30.2715}
        \\ \hline
    \end{tabular}}
\label{tab5}
\end{table*}

\begin{figure*}[htbp]
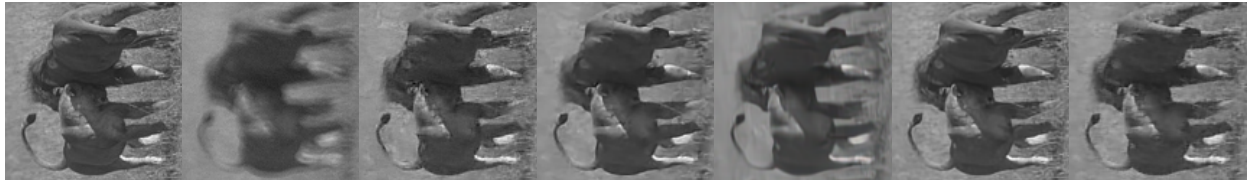

    \centering
    \begin{subfigure}[t]{0.13\textwidth}
        \centering
        \includegraphics[width=\textwidth,trim=82 207 72 107,clip]{BSD/x_01.png}
        \caption{Clear}
    \end{subfigure}%
    \begin{subfigure}[t]{0.13\textwidth}
        \centering
        \includegraphics[width=\textwidth,trim=60 185 50 85,clip]{BSD/y_01.png}
        \caption{Blurred}
    \end{subfigure}%
    \centering
    \begin{subfigure}[t]{0.13\textwidth}
        \centering
        \includegraphics[width=\textwidth,trim=126 251 116 151,clip]{BSD/EPLL_01.png}
        \caption{EPLL}
    \end{subfigure}%
    \begin{subfigure}[t]{0.13\textwidth}
        \centering
        \includegraphics[width=\textwidth,trim=104 229 94 129,clip]{BSD/FCNN_01.png}
        \caption{FCNN}
    \end{subfigure}%
    \centering
    \centering
    \begin{subfigure}[t]{0.13\textwidth}
        \centering
        \includegraphics[width=\textwidth,trim=104 229 94 129,clip]{BSD/SVMAP_01.png}
        \caption{SVMAP}
    \end{subfigure}%
    \begin{subfigure}[t]{0.13\textwidth}
        \centering
        \includegraphics[width=\textwidth,trim=82 207 72 107,clip]{BSD/DWDN_01.png}
        \caption{DWDN}
    \end{subfigure}%
    \begin{subfigure}[t]{0.13\textwidth}
        \centering
        \includegraphics[width=\textwidth,trim=104 229 94 129,clip]{BSD/DecovNetConvergenceBSDNonlinear0.01_01.png}
        \caption{DECUN}
    \end{subfigure}  
\caption{Visual comparison among the top four highest scoring methods.}  
\label{fig10}
\end{figure*}

\begin{figure*}[htbp]
    \centering
    \begin{subfigure}[t]{0.090\textwidth}
        \centering
        \includegraphics[width=\textwidth,trim=22 22 22 22,clip]{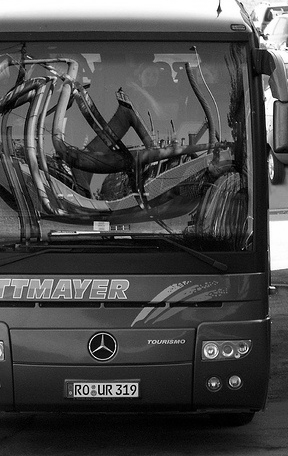}
        \includegraphics[width=\textwidth,trim=22 22 22 22,clip]{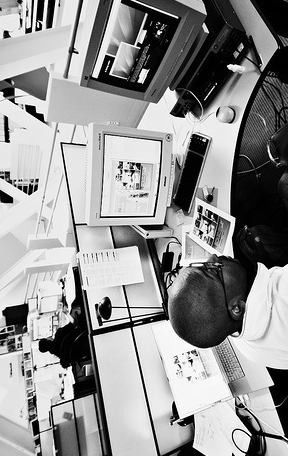}
        \caption{Clear}
    \end{subfigure}%
    \begin{subfigure}[t]{0.090\textwidth}
        \centering
        \includegraphics[width=\textwidth]{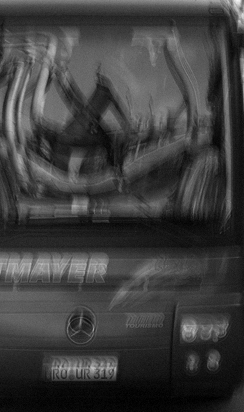}
        \includegraphics[width=\textwidth]{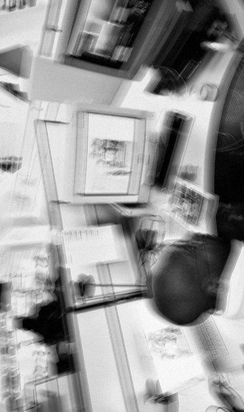}
        \caption{Blurred}
    \end{subfigure}%
    \centering
    \begin{subfigure}[t]{0.090\textwidth}
        \centering
        \includegraphics[width=\textwidth,trim=44 44 44 44,clip]{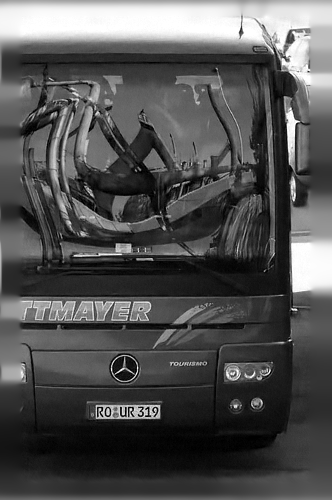}
        \includegraphics[width=\textwidth,trim=44 44 44 44,clip]{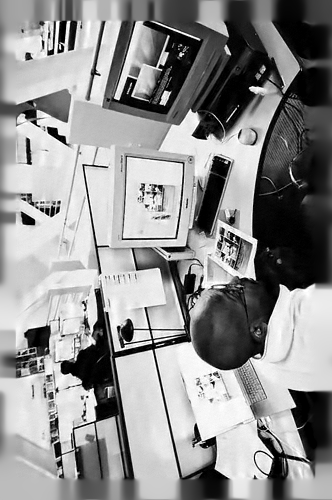}
        \caption{EPLL}
    \end{subfigure}%
    \begin{subfigure}[t]{0.090\textwidth}
        \centering
        \includegraphics[width=\textwidth,trim=44 44 44 44,clip]{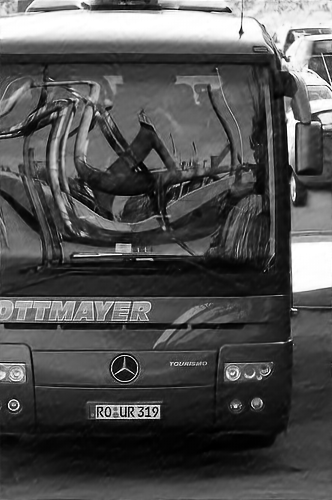}
        \includegraphics[width=\textwidth,trim=44 44 44 44,clip]{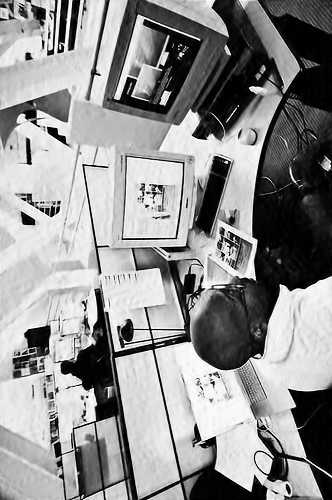}
        \caption{MLP}
    \end{subfigure}%
    \centering
    \begin{subfigure}[t]{0.090\textwidth}
        \centering
        \includegraphics[width=\textwidth,trim=44 44 44 44,clip]{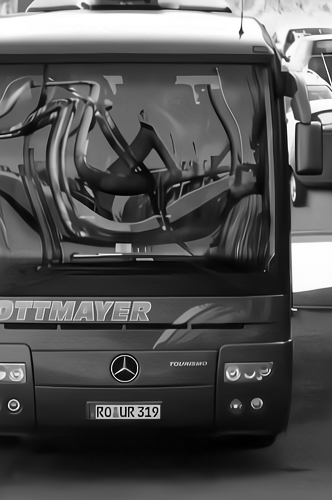}
        \includegraphics[width=\textwidth,trim=44 44 44 44,clip]{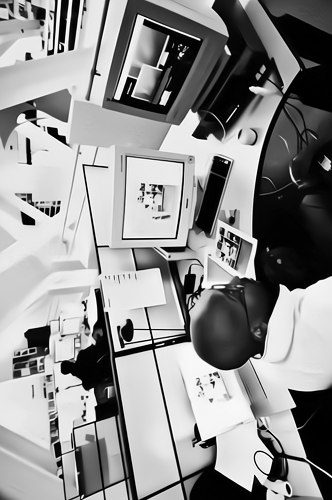}
        \caption{IRCNN}
    \end{subfigure}%
    \begin{subfigure}[t]{0.090\textwidth}
        \centering
        \includegraphics[width=\textwidth,trim=44 44 44 44,clip]{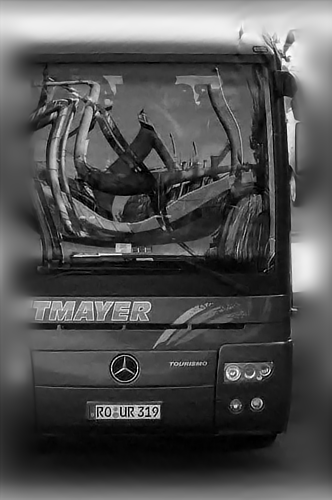}
        \includegraphics[width=\textwidth,trim=44 44 44 44,clip]{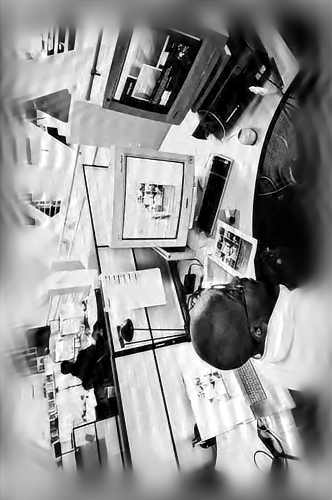}
        \caption{FCNN}
    \end{subfigure}%
    \centering
    \begin{subfigure}[t]{0.090\textwidth}
        \centering
        \includegraphics[width=\textwidth,trim=44 44 44 44,clip]{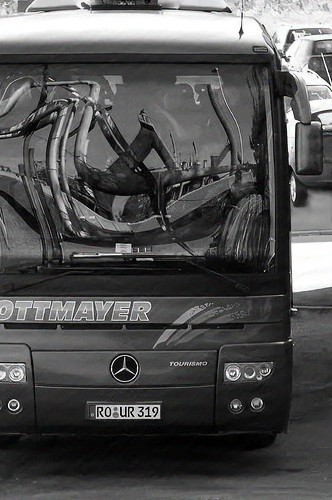}
        \includegraphics[width=\textwidth,trim=44 44 44 44,clip]{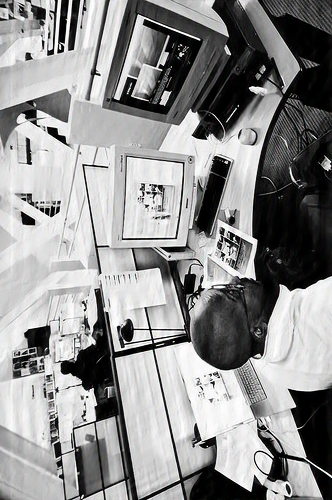}
        \caption{FNBD}
    \end{subfigure}%
    \begin{subfigure}[t]{0.090\textwidth}
        \centering
        \includegraphics[width=\textwidth,trim=44 44 44 44,clip]{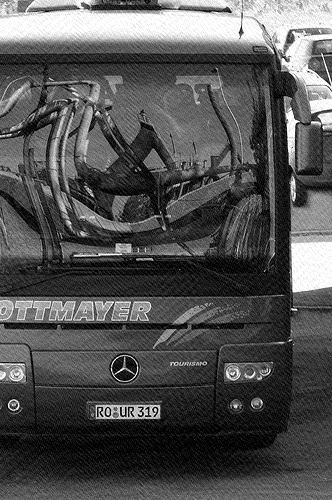}
        \includegraphics[width=\textwidth,trim=44 44 44 44,clip]{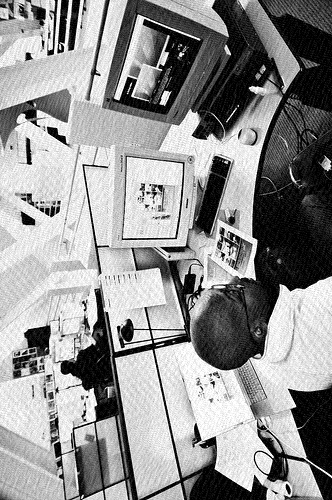}
        \caption{MoDL}
    \end{subfigure}%
    \begin{subfigure}[t]{0.090\textwidth}
        \centering
        \includegraphics[width=\textwidth,trim=44 44 44 44,clip]{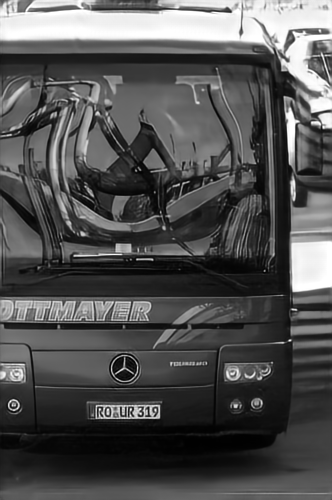}
        \includegraphics[width=\textwidth,trim=44 44 44 44,clip]{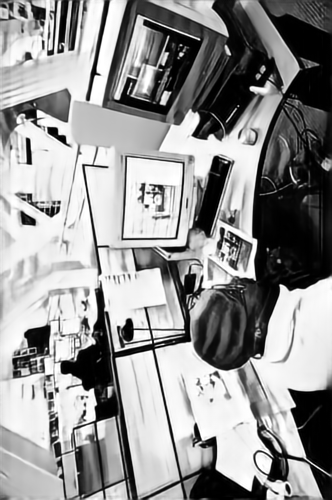}
        \caption{INFWIDE}
    \end{subfigure}%
    \centering
    \begin{subfigure}[t]{0.090\textwidth}
        \centering
        \includegraphics[width=\textwidth,trim=22 22 22 22,clip]{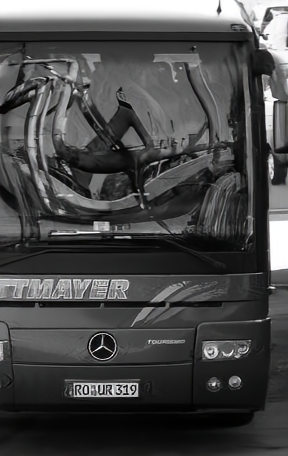}
        \includegraphics[width=\textwidth,trim=22 22 22 22,clip]{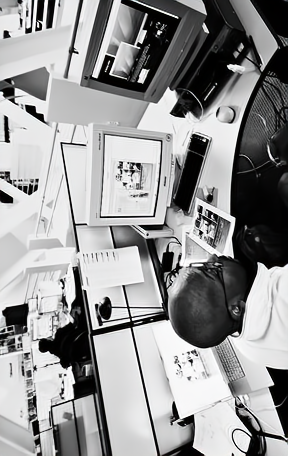}
        \caption{DWDN}
    \end{subfigure}%
    \begin{subfigure}[t]{0.090\textwidth}
        \centering
        \includegraphics[width=\textwidth,trim=44 44 44 44,clip]{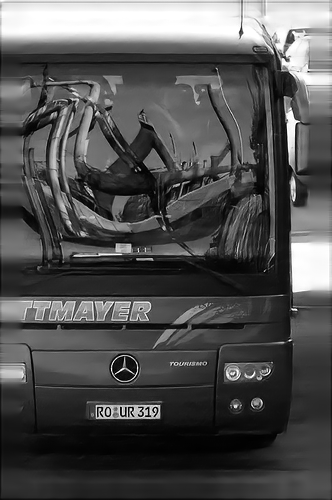}
        \includegraphics[width=\textwidth,trim=44 44 44 44,clip]{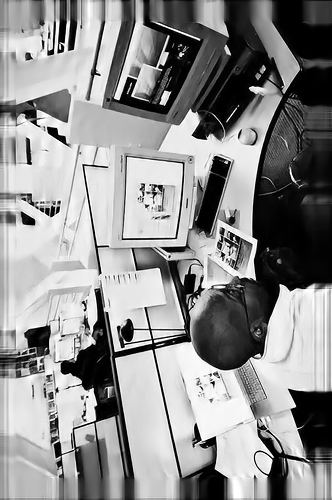}
        \caption{DECUN}
    \end{subfigure}  
\caption{Visual comparison over VOC dataset and nonlinear kernels.}  

\label{fig11}
\end{figure*}

\begin{table*}[htbp]
\caption{Quantitative comparison over VOC dataset and nonlinear kernels. The best score is in bold fonts.}
\centering
\begin{tabular}{|c|c|c|c|c|c|c|c|c|c|}
    \hline
    & \textbf{EPLL}\cite{zoran2011learning} & \textbf{MLP}\cite{schuler2013machine} & \textbf{IRCNN}\cite{zhang2017} & \textbf{FCNN}\cite{zhang2017learning} & \textbf{FNBD}\cite{son2017fast}   

    & \textbf{MoDL}\cite{aggarwal2018modl} 
    & \textbf{INFWIDE}\cite{zhang2023infwide} 
    & \textbf{DWDN}\cite{dong2021dwdn} 


    &  \textbf{DECUN} \\ \hline
    SSIM 
    & 0.9162
    & 0.9037

    & 0.8743

    & 0.9096

    & 0.8860


    & 0.8865

    & 0.8834

    & 0.9280

    
    & \textbf{0.9294}

        \\ \hline
    PSNR 
    & 30.0522

    & 29.4861

    & 28.3476

    & 29.0075

    & 26.7567


    & 27.0891

    & 26.9192

    & 31.2046


    & \textbf{31.3585}
        \\ \hline
    \end{tabular}
\label{tab7}
\end{table*}

\subsubsection{Computational Comparisons} 

\begin{table*}[htbp]
\centering
\caption{Per image run time comparison. The best score is in bold fonts.}
\begin{tabular}{|c|c|c|c|c|c|c|c|c|}
    \hline
    & \textbf{EPLL}\cite{zoran2011learning} & \textbf{MLP}\cite{schuler2013machine} & \textbf{IRCNN}\cite{zhang2017} & \textbf{FCNN}\cite{zhang2017learning} & \textbf{FNBD}\cite{son2017fast}   & \textbf{BM3D} \cite{makinen2020collaborative} & \textbf{SVMAP}\cite{dong2021learning} &  \textbf{DECUN} \\ \hline
    Execution time (s) 
    & 68.0663
    & 0.2950
    & 0.4953
    & 0.1485
    & 0.1136
    & 3.6090
    & 2.0086
    & \textbf{0.0614}
        \\ \hline
    \end{tabular}
\label{tab6}
\end{table*}

Table \ref{tab6} summarizes the execution time of each method for processing a typical blurred image of resolution 480 $\times$ 320 and a blur kernel of size 44 $\times$ 44. We include measurements of the running time of each method to deblur this image on GPU.
Specifically, the two benchmark platforms are 1.) Intel Core i7–6900K, 3.20GHz CPU, 62.7GB of RAM, and 2.) an NVIDIA TITAN X GPU. The results are included in Table~\ref{tab6}. Iterative algorithms (EPLL, BM3D, SVMAP) usually run 
hundreds of iterations, which cost much longer time than other methods. In particular, EPLL iterates over each individual patch (8*8 pixels), leading to the longest time to process a single image. On the contrary, deep learning methods (MLP, IRCNN, FCNN, FNBD) process images in an end-to-end fashion with hardware accelerations and are generally much faster. However, deep generic networks typically carry out a large number of filters and operations to ensure adequate modeling power, which induces a higher computational burden and slower inference speed. Among all the methods, DECUN stands out as the fastest-running method and is almost two times as fast as its leading competitor (FNBD). Compared to traditional iterative algorithms, DECUN runs much fewer number of iterations. On the other hand, as the network itself encodes domain knowledge through unrolling, it is free from over-parametrization and much more compact compared to generic off-the-shelf deep networks. As it stands, it outperforms other state-of-the-art techniques while achieving significant computational savings simultaneously.

\begin{figure}[ht]
    \centering
    \begin{subfigure}[t]{0.39\textwidth}
        \centering
        \includegraphics[width=\textwidth,trim=5 5 5 5,clip]{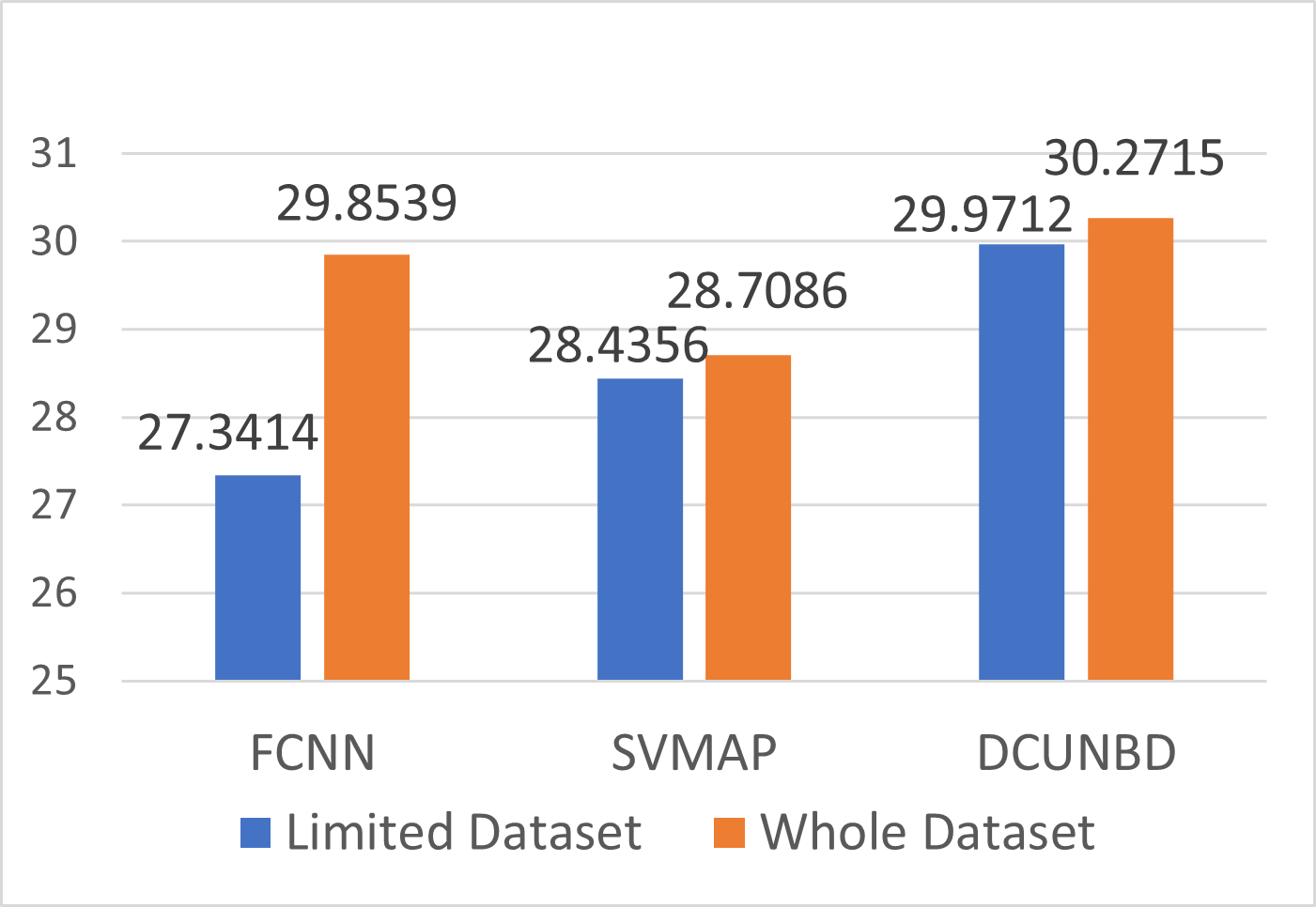}
        \caption{PSNR}
    \end{subfigure}%
    
    \begin{subfigure}[t]{0.39\textwidth}
        \centering
        \includegraphics[width=\textwidth,trim=5 5 5 5,clip]{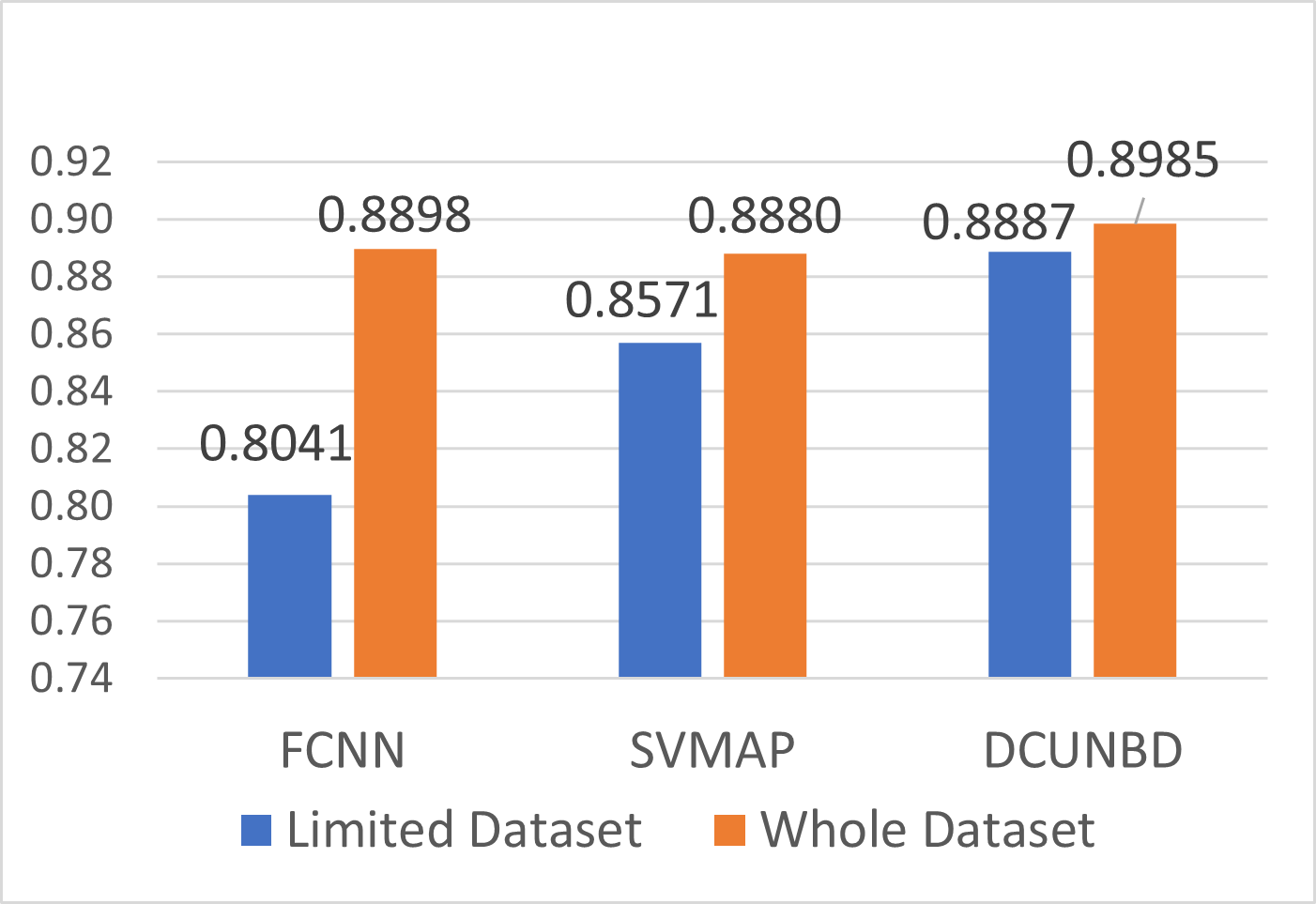}
        \caption{SSIM}
    \end{subfigure}%
    \centering
\caption{Performance evaluation in a limited training set-up.}  
\label{fig12}
\end{figure}

\subsubsection{Evaluation in limited training scenarios}
Three methods with the highest score in the previous evaluation (FCNN, SVMAP, and DECUN) are chosen to be trained on a limited dataset with nonlinear kernels. In particular, this limited dataset contains only 10 $\%$ of the training images used for the results in Table \ref{tab5}. The test image set remains the same as the Testing dataset for nonlinear kernels. Fig \ref{fig12} shows the results. The state-of-the-art deep learning approach FCNN with a large number of parameters seems to overfit when training is reduced. Compared to the network trained on the whole dataset, FCNN's performance on the limited dataset has a considerable drop. Because SVMAP utilizes powerful image priors, this technique exhibits expected training robustness and shows a relatively more modest drop. With the fewest parameters among the three methods, DECUN does the best in a limited training scenario. Note that DECUN achieves the smallest gap (esp. in SSIM) between the model trained on the limited dataset and the whole dataset thereby exhibiting superior generalizability. Remarkably, the PSNR/SSIM scores for DECUN in limited training are still higher than competing alternatives, even as they have access to full training.

\section{Conclusion}\label{sec:conclusions}
This paper proposes a DEep, Convergent Unrolling Network (DECUN) based on half-quadratic splitting (HQS) which achieves the simultaneous goals of enhancing performance,  interpretability and  providing theoretical convergence guarantees for the unrolled network. In order to ensure convergence along with substantial network modeling power, we develop a new parametrization scheme and carry out rigorous analytical studies to establish the convergence guarantee of the proposed unrolled network as well as quantify the convergence rate. Our analytical claims are verified through simulation studies that demonstrate convergence. Finally, we experimentally compare DECUN with state-of-the-art deblurring techniques, which demonstrates its favorable performance, particularly in the low-training regime and enhahnced computational benefits. While this study focuses on the non-blind image deblurring problem; a viable future research direction is carrying out similar analysis towards blind image deconvolution problems.
\begin{appendices}
\section{Proof of Theorem~\ref{convergence_l}}\label{sec:proof_thm1}
  \noindent We leverage selected technical results  that help us establish Theorem~\ref{convergence_l}. The following two lemmas are adapted from ~\cite{wang2008new}, which state that both $s^{l}$ and $h^{l}$ are non-expansive operators.
\vspace{-2mm}
\begin{lemma}\label{lemma1}
  For $x\in \mathbb{R}^{2}$, the 2D shrinkage function $s^{l}: \mathbb{R}^{2}\rightarrow \mathbb{R}^{2}$ defined as 
  \begin{equation}
    \label{S_operator}
    \vspace{-1mm}
    s^{l}(\bx)=s_{\frac{1}{\beta^{l}}}(\bx)\triangleq \max\left\{\|\bx\|_2-\frac{1}{\beta^{l}}, 0\right\}\frac{\bx}{\|\bx\|_2},
    \vspace{-1mm}
  \end{equation}
  where $\beta^{l} > 0$, is nonexpansive
  \begin{equation}
    \label{nonexpansiveness}
    \|s^{l}(\bx_{1})-s^{l}(\bx_{2})\|_{2}\leq \|\bx_{1}-\bx_{2}\|_{2}.
  \end{equation}
  Furthermore, if $\|s^{l}(\bx_{1})-s^{l}(\bx_{2})\|_{2}=\|\bx_{1}-\bx_{2}\|_{2}$, then $s^{l}(\bx_{1})-s^{l}(\bx_{2})=\bx_{1}-\bx_{2}$.
  \begin{proof}
   Refer to Proposition 3.1 in~\cite{wang2008new}. For an alternative proof method, the shrinkage function $s^{l}(\cdot)$ is the proximity operator of the $l_1$ norm. Theorem 3 in \cite{cheney1959proximity} shows that the proximity map for a closed convex set is non-expansive.
  \end{proof}
\end{lemma}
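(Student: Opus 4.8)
The plan is to exploit the radial structure of the 2D shrinkage map and reduce the claim to an elementary one-dimensional fact about scalar soft-thresholding. Writing each nonzero argument in polar form $\bx_i = r_i\,\hat{\bu}_i$ with $r_i=\norm{\bx_i}_2$ and $\hat{\bu}_i$ a unit vector, and setting $\lambda=1/\beta^l$ together with the scalar clamp $g(r)=\max\{r-\lambda,0\}$, the key observation is that $s^l$ preserves direction while applying $g$ to the radius, i.e. $s^l(\bx_i)=g(r_i)\,\hat{\bu}_i$. Consequently both $\norm{\bx_1-\bx_2}_2^2$ and $\norm{s^l(\bx_1)-s^l(\bx_2)}_2^2$ depend only on the two radii and the single cosine $c=\langle\hat{\bu}_1,\hat{\bu}_2\rangle$. (An alternative, which is the route the lemma cites, is to recognize $s^l$ as the proximity operator of $\lambda\norm{\cdot}_2$ and invoke the general fact that proximity operators of proper closed convex functions are firmly nonexpansive; I keep the self-contained radial computation as the primary argument and mention this viewpoint only as corroboration.)

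For the nonexpansiveness inequality, first I would expand
\[
\norm{\bx_1-\bx_2}_2^2 - \norm{s^l(\bx_1)-s^l(\bx_2)}_2^2 = A - 2cB,
\]
where $A=(r_1^2-g(r_1)^2)+(r_2^2-g(r_2)^2)$ and $B=r_1 r_2 - g(r_1)g(r_2)$. Since $0\le g(r)\le r$, both $A\ge 0$ and $B\ge 0$, so the right side is minimized over $c\in[-1,1]$ at $c=1$, giving the lower bound $A-2B=(r_1-r_2)^2-(g(r_1)-g(r_2))^2$. The final step is the elementary observation that $g$ is nondecreasing and $1$-Lipschitz on $\mathbb{R}$, so $\abs{g(r_1)-g(r_2)}\le\abs{r_1-r_2}$, which makes $A-2B\ge 0$ and proves $\norm{s^l(\bx_1)-s^l(\bx_2)}_2\le\norm{\bx_1-\bx_2}_2$. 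The degenerate cases where one or both $\bx_i$ vanish are absorbed by the convention $s^l(\bzero)=\bzero$, where the inequality holds trivially.

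The main obstacle is the equality clause, which requires tracking when every inequality above is tight. Equality forces $A-2cB=0$; combined with $A-2cB\ge A-2B\ge 0$, this pins down both $A-2B=0$ and $B(1-c)=0$. I would then split into two cases. If $c=1$, the two arguments lie on a common ray $\hat{\bu}$, and $A-2B=0$ gives $\abs{g(r_1)-g(r_2)}=\abs{r_1-r_2}$; monotonicity of $g$ upgrades this to the signed identity $g(r_1)-g(r_2)=r_1-r_2$, whence $s^l(\bx_1)-s^l(\bx_2)=(g(r_1)-g(r_2))\hat{\bu}=(r_1-r_2)\hat{\bu}=\bx_1-\bx_2$. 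If instead $B=0$, then since $0\le g(r_i)\le r_i$ the product condition forces $g(r_i)=r_i$ for each index with $r_i>0$; because $\lambda>0$, $g(r)=r$ holds only at $r=0$, so both arguments must be $\bzero$ and the identity is trivial. Assembling the two cases closes the equality statement. I expect the subtlety to lie in correctly handling the mixed situation where one radius sits above and the other below the threshold $\lambda$, and in ensuring the sign, not merely the magnitude, of $g(r_1)-g(r_2)$ matches that of $r_1-r_2$; the monotonicity of $g$ is exactly what resolves both.
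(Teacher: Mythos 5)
Your proposal is correct in substance, and it takes a genuinely different route from the paper: the paper gives no self-contained argument at all, instead citing Proposition 3.1 of~\cite{wang2008new} and, as an alternative, the classical fact that proximity maps/projections onto closed convex sets are nonexpansive~\cite{cheney1959proximity}. Your polar-coordinate reduction---writing $s^l(\bx_i)=g(r_i)\hat{\bu}_i$ with $g(r)=\max\{r-\lambda,0\}$, expanding both squared distances, and observing that the deficit $A-2cB$ is minimized at $c=1$, where it equals $(r_1-r_2)^2-(g(r_1)-g(r_2))^2\ge 0$ by monotonicity and $1$-Lipschitzness of $g$---is elementary and self-contained. It also has a concrete advantage over the cited prox-operator route: plain nonexpansiveness of a projection does not by itself yield the equality clause $s^l(\bx_1)-s^l(\bx_2)=\bx_1-\bx_2$ (that requires firm nonexpansiveness plus a Cauchy--Schwarz equality argument), whereas your case split ($c=1$ versus $B=0$) delivers it directly. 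What the paper's route buys is brevity and generality: the prox/projection viewpoint works in any dimension and for any proper closed convex function, with no computation.

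One step needs repair. In the $B=0$ branch you assert that the product condition $g(r_1)g(r_2)=r_1r_2$ ``forces $g(r_i)=r_i$ for each index with $r_i>0$.'' That is false in the mixed case $r_1>0$, $r_2=0$: there the product condition reads $0=0$ and constrains nothing, so it cannot rule out, say, $r_1>\lambda$. The fix is immediate from facts you already have on the table: in this branch $B=0$ together with $A-2B=0$ gives $A=0$, and since $A$ is the sum of the two nonnegative terms $r_i^2-g(r_i)^2$, each must vanish, i.e.\ $g(r_i)=r_i$ for \emph{both} indices; because $\lambda>0$ implies $g(r)<r$ for every $r>0$, this forces $r_1=r_2=0$, and the equality clause holds trivially. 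So close that case with $A=0$ rather than with $B=0$; with that substitution the argument is complete (the degenerate zero-vector cases are then subsumed, since one may assign the zero vector an arbitrary unit direction without affecting any displayed identity).
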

\vspace{-4mm}
\begin{lemma}\label{lemma2}
    Assume $\mathcal{N}(\bK) \cap \mathcal{N}(\bar{\bD})=0$, where $\mathcal{N}(\cdot)$ represents the null space of a matrix. Then for any $\bw\neq \widetilde{\bw}\in \mathbb{R}^{n}$, it holds that
  \begin{equation}
    \label{hl_nonexpansiveness}
    \|h^{l}(\bw)-h^{l}(\widetilde{\bw})\|_{2}\leq \|\bw-\widetilde{\bw}\|_{2}.
  \end{equation}
  \begin{proof}
    Refer to Propostion 3.2 in~\cite{wang2008new}.
  \end{proof}
\end{lemma}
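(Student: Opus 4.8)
The plan is to reduce the claim to a spectral-norm bound on the linear part of $h^{l}$, and then bound that spectral radius by a short eigenvalue computation in the spirit of Proposition~3.2 of~\cite{wang2008new}. From the definition~\eqref{eq:function_h_l}, $h^{l}$ is affine in $\bw$: the term $\frac{\mu}{\beta^{l}}\bD^{l}(\bM^{l})^{-1}\bK^{\mathrm{T}}\by$ does not depend on $\bw$ and therefore cancels in a difference, leaving
\[
h^{l}(\bw)-h^{l}(\widetilde{\bw})=\bD^{l}(\bM^{l})^{-1}{(\bD^{l})}^{\mathrm{T}}\,(\bw-\widetilde{\bw})=\mathbf{G}^{l}(\bw-\widetilde{\bw}).
\]
Hence~\eqref{hl_nonexpansiveness} is equivalent to $\|\mathbf{G}^{l}\|_{2}\le 1$. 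Since $\mathbf{G}^{l}=\bD^{l}(\bM^{l})^{-1}{(\bD^{l})}^{\mathrm{T}}$ is symmetric positive semidefinite (as $(\bM^{l})^{-1}$ is symmetric positive definite), its spectral norm equals its largest eigenvalue $\rho(\mathbf{G}^{l})$, so it suffices to prove $\rho(\mathbf{G}^{l})\le 1$.

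First I would verify well-definedness: by Assumption~\ref{assump}, $\mathcal{N}(\bK)\cap\mathcal{N}(\bD^{l})=\{\bzero\}$ (the $\bar{\bD}$ appearing in the statement should be read as the per-layer $\bD^{l}$ entering $h^{l}$), so no nonzero vector is annihilated by both summands of $\bM^{l}={(\bD^{l})}^{\mathrm{T}}\bD^{l}+\frac{\mu}{\beta^{l}}\bK^{\mathrm{T}}\bK$. As a sum of positive semidefinite matrices with trivial common kernel, $\bM^{l}$ is positive definite, hence invertible, and $\mathbf{G}^{l}$ is well defined.

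The core step is a direct bound on any eigenvalue $\lambda>0$ of $\mathbf{G}^{l}$. Let $\bv\ne\bzero$ satisfy $\mathbf{G}^{l}\bv=\lambda\bv$ and set $\bx=(\bM^{l})^{-1}{(\bD^{l})}^{\mathrm{T}}\bv$, so that $\bD^{l}\bx=\lambda\bv$ and $\bM^{l}\bx={(\bD^{l})}^{\mathrm{T}}\bv$. Two evaluations of the same quadratic form then pin down $\lambda$:
\[
\bx^{\mathrm{T}}\bM^{l}\bx=\bx^{\mathrm{T}}{(\bD^{l})}^{\mathrm{T}}\bv=(\bD^{l}\bx)^{\mathrm{T}}\bv=\lambda\|\bv\|_{2}^{2},
\]
while, expanding $\bM^{l}$,
\[
\bx^{\mathrm{T}}\bM^{l}\bx=\|\bD^{l}\bx\|_{2}^{2}+\tfrac{\mu}{\beta^{l}}\|\bK\bx\|_{2}^{2}\ge\|\bD^{l}\bx\|_{2}^{2}=\lambda^{2}\|\bv\|_{2}^{2}.
\]
Combining the two gives $\lambda\|\bv\|_{2}^{2}\ge\lambda^{2}\|\bv\|_{2}^{2}$; dividing by $\lambda\|\bv\|_{2}^{2}>0$ yields $\lambda\le 1$. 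Zero eigenvalues trivially satisfy the bound, so $\rho(\mathbf{G}^{l})\le 1$ and therefore $\|\mathbf{G}^{l}\|_{2}\le 1$, establishing~\eqref{hl_nonexpansiveness}.

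I do not expect a serious obstacle here; the only subtlety is correctly identifying that the Lipschitz constant of the affine map $h^{l}$ is the spectral norm of $\mathbf{G}^{l}$ and that symmetry lets us replace $\|\mathbf{G}^{l}\|_{2}$ by $\rho(\mathbf{G}^{l})$. An equivalent and slightly more structural route I would keep in reserve is to note ${(\bD^{l})}^{\mathrm{T}}\bD^{l}\preceq\bM^{l}$ and apply congruence by $(\bM^{l})^{-1/2}$ to conclude $(\bM^{l})^{-1/2}{(\bD^{l})}^{\mathrm{T}}\bD^{l}(\bM^{l})^{-1/2}\preceq\bI$, whose eigenvalues coincide with those of $\mathbf{G}^{l}$; this makes the bound $\rho(\mathbf{G}^{l})\le 1$ transparent but requires the matrix square root. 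Either way, no strictness is needed for the stated inequality, so the equality-case machinery used for $s^{l}$ in Lemma~\ref{lemma1} is not required.
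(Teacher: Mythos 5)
Your proof is correct. Note that the paper does not actually prove this lemma inline: it defers entirely to Proposition~3.2 of Wang \emph{et al.}, so you have supplied the missing details, and in essentially the same spirit as the cited source. Wang \emph{et al.} obtain $\rho(\mathbf{G}^{l})\le 1$ from the ordering ${(\bD^{l})}^{\mathrm{T}}\bD^{l}\preceq\bM^{l}$ via the symmetric splitting with $(\bM^{l})^{-1/2}$ --- precisely the ``structural route'' you keep in reserve --- whereas your main argument replaces that with a direct eigenvector computation (evaluating the quadratic form $\bx^{\mathrm{T}}\bM^{l}\bx$ two ways) that avoids matrix square roots and is more elementary; the two are interchangeable here. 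Two smaller points you handled well: you correctly read the lemma's hypothesis $\mathcal{N}(\bK)\cap\mathcal{N}(\bar{\bD})=\{\bzero\}$ as the per-layer condition $\mathcal{N}(\bK)\cap\mathcal{N}(\bD^{l})=\{\bzero\}$ of Assumption~\ref{assump} (the $\bar{\bD}$ in the statement is evidently a typo, since $h^{l}$ and $\bM^{l}$ are built from $\bD^{l}$), and you are right that no equality-case analysis is needed --- the convergence proofs in the appendices use only nonexpansiveness of $h^{l}$, with the sharper rate information coming separately from the bound $\|h^{l}(\bw^{l})-h^{l}(\bw^{\ast})\|\le\sqrt{\rho\left((\mathbf{G}^{l})^{2}\right)}\,\|\bw^{l}-\bw^{\ast}\|$ in Theorem~\ref{thm:convergence_rate}, which your spectral-norm identity in fact already delivers.
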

\vspace{-4mm}
\begin{lemma}\label{lemma3}
  Let ${\{a_k\}}_k$ be a real non-negative sequence. If $a_{k+1}\leq a_k+\epsilon_k$ for all $k$, $\epsilon_k\geq 0$ and $\sum_{k=1}^\infty\epsilon_k$ converges, then $a_k$ also converges.
\end{lemma}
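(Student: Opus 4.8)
The plan is to reduce the statement to the elementary fact that a monotone real sequence bounded on one side converges, by constructing from $\{a_k\}$ an auxiliary sequence that is manifestly non-increasing. The recursive inequality $a_{k+1}\le a_k+\epsilon_k$ says that $a_k$ can grow from one index to the next only by the increments $\epsilon_k$; since these increments are summable, the idea is to ``subtract them off'' so as to expose genuine monotonicity.

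Concretely, I would first introduce the tail sums $T_k=\sum_{j=k}^{\infty}\epsilon_j$, which are well defined and finite because $\sum_j\epsilon_j$ converges, and which satisfy $T_k\ge 0$ and $T_k\to 0$. Define $c_k=a_k+T_k$. Using the identity $T_k=\epsilon_k+T_{k+1}$ together with the hypothesis, I would compute
\[
c_{k+1}=a_{k+1}+T_{k+1}\le a_k+\epsilon_k+T_{k+1}=a_k+T_k=c_k,
\]
so $\{c_k\}$ is non-increasing. Moreover $c_k=a_k+T_k\ge 0$, since both summands are non-negative, hence $\{c_k\}$ is bounded below by $0$.

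By the monotone convergence theorem for real sequences, $\{c_k\}$ converges to some limit $c\ge 0$. Finally, writing $a_k=c_k-T_k$ and using $T_k\to 0$, the sequence $\{a_k\}$ converges to the same limit $c$, which completes the argument.

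I do not anticipate a genuine obstacle, as the result is classical. The only point requiring a little care is selecting the correct monotone surrogate: one should subtract the \emph{tail} $T_k$ rather than the partial sum $\sum_{j<k}\epsilon_j$, because the tail construction yields both monotonicity and an immediate lower bound directly from the non-negativity of $a_k$. The partial-sum construction works equally well but forces one to invoke the finiteness of $\sum_j\epsilon_j$ separately in order to bound the surrogate below.
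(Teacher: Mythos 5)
Your proof is correct. The tail-sum construction is sound: $T_k=\sum_{j= k}^{\infty}\epsilon_j$ is finite and non-negative with $T_k\to 0$, the identity $T_k=\epsilon_k+T_{k+1}$ together with the hypothesis gives $c_{k+1}\le c_k$ for $c_k=a_k+T_k$, non-negativity of $a_k$ and $T_k$ bounds $c_k$ below, monotone convergence gives $c_k\to c$, and $a_k=c_k-T_k\to c$. However, your route differs from the paper's in an essential way: the paper does not prove this lemma at all, but simply invokes Lemma 3.1 of \cite{QFoptimization2001} (a result on quasi-Fej\'er monotone sequences) specialized with $\chi=1$ and the sequence $\beta_n$ taken to be identically zero. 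Your argument is the standard elementary proof of that quasi-Fej\'er-type fact, and what it buys is self-containedness: the lemma is established from first principles (monotone convergence of real sequences) with no appeal to external machinery, which is arguably preferable in a paper where this lemma is a load-bearing step in the convergence proof of Theorem~\ref{convergence_l}. What the paper's citation buys instead is brevity and a pointer to the broader quasi-Fej\'er framework, of which this lemma is the simplest instance. Your closing remark about preferring the tail $T_k$ over the partial sums $\sum_{j<k}\epsilon_j$ is also apt: both work, but the tail version makes monotonicity and the lower bound immediate.
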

\begin{proof}
Refer to Lemma 3.1 in~\cite{QFoptimization2001} with the setting of $\chi=1$ and $\beta_{n}$ chosen as a zero sequence with $n \geq 0$. 
\end{proof}

\vspace{-2mm}
We are now ready to prove Theorem 1.
\vspace{-1mm}
\begin{proof}
According to the assumption on $\bD^{l}$, $\beta^{l}$ and the norm bounded sequence ${\{\bE^l\}}_l$, we have $  \lim_{l\rightarrow+\infty} \bD^{l} = \lim_{l\rightarrow+\infty} (\bar{\bD}+\xi_{l}\bE^l)= \bar{\bD} $ and 
$\lim_{l\rightarrow+\infty} \beta^{l}=\lim_{l\rightarrow+\infty} (\bar{\beta}+\gamma_{l})= \bar{\beta}$.
It therefore follows that 
\vspace{-2mm}
\begin{equation}
\begin{split}
\lim_{l\rightarrow+\infty} \bM^{l}&= \lim_{l\rightarrow+\infty} {(\bD^{l})}^{T}\bD^{l}+\frac{\mu}{\beta^{l}}\bK^{T}\bK\\
&= {(\bar{\bD})}^{T}\bar{\bD}+\frac{\mu}{\bar{\beta}}\bK^{T}\bK = \bM^{\ast}.
\vspace{-3mm}
\end{split}
\end{equation}
By continuity of $s^{l}$ and $h^l$, $\forall \!~\bx\in\mathbb{R}^n$, we have $\lim_{l\rightarrow+\infty}s^{l}(\bx) =\max\left\{\|\bx\|_2-\frac{1}{\bar{\beta}}, 0\right\}\frac{\bx}{\|\bx\|_2}\nonumber := s^{\ast}$ and $\lim_{l\rightarrow+\infty} h^{l}(\bx) =\bar{\bD}{(\bM^{\ast})}^{-1}\left[{(\bar{\bD})}^T(\bx)+\frac{\mu}{\bar{\beta}}\bK^T \by\right]\nonumber :=h^{\ast}(\bx)$
Invoking Lemma~\ref{lemma1} and~\ref{lemma2}, we have $  \|s^l(h^l(\bw^l)) - s^l(h^l(\widetilde{\bw}))\|\leq\|h^l(\bw^l) - h^l(\widetilde{\bw})\|\nonumber \leq\|\bw^l-\widetilde{\bw}\|$.

Furthermore, Lemma~\ref{lemma1} and~\ref{lemma2} ensures non-expansiveness of the shrinkage operator $s^{\ast}$ and the function $h^{\ast}$; suppose $\tilde{\bw}$ is any fixed point of $s^{\ast}\circ h^{\ast}$, that $\tilde{\bw}=s^{\ast}(h^{\ast}(\tilde{\bw}))$, then we have
\begin{equation}\label{proof1_1}
\begin{split}
\vspace{-1mm}
\|\bw^{l+1}-\tilde{\bw}\|&=\|s^{l}(h^{l}(\bw^{l}))-s^{\ast}(h^{\ast}(\tilde{\bw}))\|\\
&=\|s^{l}(h^{l}(\bw^{l}))-s^{l}(h^{l}(\tilde{\bw}))\\
&\qquad\quad+s^{l}(h^{l}(\tilde{\bw}))-s^{\ast}(h^{\ast}(\tilde{\bw}))\|\\
&\leq \|s^{l}(h^{l}(\bw^{l}))-s^{l}(h^{l}(\tilde{\bw}))\|\\
&\qquad\quad+\|s^{l}(h^{l}(\tilde{\bw}))-s^{\ast}(h^{\ast}(\tilde{\bw}))\|\\
& = \|\bw^{l}-\tilde{\bw}\|+\zeta^{l},
\vspace{-2mm}
\end{split}
\end{equation}
where we write $\zeta^{l} = \|s^{l}(h^{l}(\tilde{\bw}))-s^{\ast}(h^{\ast}(\tilde{\bw}))\|$. We are going to show that $\sum_{l=1}^{l}\zeta^{l}$ converges, which indicates that the real non-negative sequence $\|\bw^{l+1}-\tilde{\bw}\|$ converges as well according to Lemma~\ref{lemma3}.
The assertion holds trivially if $\|h^\ast(\widetilde{\bw})\|<\frac{1}{\bar{\beta}}$, as it will be the case that $\|\bD^l\widetilde{\bu}\|<\frac{1}{\beta^l}$ for $l$ sufficiently large. Otherwise, $\|\bar{\bD}\widetilde{\bu}\|>0$. Choose $l$ sufficiently large so that $  |\beta^l-\bar{\beta}|\leq\frac{\bar{\beta}}{2},\qquad \|(\bD^l-\bar{\bD})\widetilde{\bu}\|<\frac{\|\bar{\bD}\widetilde{u}\|}{2},$
we have
\begin{equation}
\begin{split}
\label{zeta_l}
\vspace{-1mm}
& \|s^{l}(h^{l}(\tilde{\bw}))-s^{\ast}(h^{\ast}(\tilde{\bw}))\|\\
& =\|h^{l}(\tilde{\bw})-h^{\ast}(\tilde{\bw})-(\chi^{l}(h^{l}(\tilde{\bw}))-\chi^{\ast}(h^{\ast}(\tilde{\bw}))\|\\
& =\Big\|(\bD^{l}-\bar{\bD})\tilde{u}-\left(\frac{1}{\beta^{l}}\frac{\bD^{l}\tilde{u}}{\|\bD^{l}\tilde{u}\|}-\frac{1}{\bar{\beta}}\frac{\bar{\bD}\tilde{u}}{\|\bar{\bD}\tilde{u}\|}\right)\Big\|\\
& \leq \Big\|(\bD^{l}-\bar{\bD})\tilde{u}\Big\|+
\left\|\frac{1}{\beta^{l}}\frac{\bD^{l}\tilde{u}}{\|\bD^{l}\tilde{u}\|}-\frac{1}{\bar{\beta}}\frac{\bar{\bD}\tilde{u}}{\|\bar{\bD}\tilde{u}\|}\right\|\\
&\leq\|\bD^l - \bar{\bD}\|\|\widetilde{\bu}\|+\left\|\left(\frac{1}{\beta^l}-\frac{1}{\bar{\beta}}\right)\frac{\bD^l\widetilde{\bu}}{\|\bD^l\widetilde{\bu}\|}\right\|\\
&+\frac{1}{\bar{\beta}}\left\|\frac{\bD^l\widetilde{\bu}}{\|\bD^l\widetilde{\bu}\|}-\frac{\bar{\bD}\widetilde{\bu}}{\|\bar{\bD}\widetilde{\bu}\|}\right\|\\
&=\|\xi_l\|\|\widetilde{\bu}\|+\frac{|\beta^l-\bar{\beta}|}{\beta^l\bar{\beta}}+\frac{1}{\bar{\beta}}\left\|\frac{\|\bar{\bD}\widetilde{\bu}\|\bD^l\widetilde{\bu}-\|\bD^l\widetilde{\bu}\|\bar{\bD}\widetilde{\bu}}{\|\bD^l\widetilde{\bu}\|\|\bar{\bD}\widetilde{\bu}\|}\right\|\\
&\leq\|\xi_l\|\|\widetilde{\bu}\|+\frac{2|\beta^l-\bar{\beta}|}{{\bar{\beta}}^2}\\
&+\frac{2}{\bar{\beta}}\frac{\|\|\bar{\bD}\widetilde{\bu}\|(\bD^l-\bar{\bD})\widetilde{\bu}+(\|\bar{\bD}\widetilde{\bu}\|-\|\bD^l\widetilde{\bu}\|)\bar{\bD}\widetilde{\bu}\|}{{\|\bar{\bD}\widetilde{\bu}\|}^2}\\
&\leq\|\xi_l\|\|\widetilde{\bu}\|+\frac{2|\beta^l-\bar{\beta}|}{{\bar{\beta}}^2}\\
&+\frac{2}{\bar{\beta}}\frac{\|\bar{\bD}\widetilde{\bu}\|\|(\bD^l-\bar{\bD})\widetilde{\bu}\|+\|\bar{\bD}\widetilde{\bu}-\bD^l\widetilde{\bu}\|\|\bar{\bD}\widetilde{\bu}\|}{{\|\bar{\bD}\widetilde{\bu}\|}^2}\\
&\leq\|\xi_l\|\|\widetilde{\bu}\|+\frac{2|\gamma_l|}{{\bar{\beta}}^2} + \frac{2}{\bar{\beta}}\frac{\|\xi_l\|\|\widetilde{\bu}\|}{\|\bar{\bD}\widetilde{\bu}\|}
\end{split}
\vspace{-4mm}
\end{equation}
where $\tilde{u}=\left({\bar{\bD}}^\mathrm{T} \bar{\bD} + \frac{\mu}{\bar{\beta}}\bK^\mathrm{T} \bK\right)^{-1}\left({\bar{\bD}}^\mathrm{T} \tilde{\bw}+\frac{\mu}{\bar{\beta}}\bK^\mathrm{T} \by\right)$,
which is bounded. The convergence of $\sum_l\zeta^l$ follows from that of $\sum_l\|\xi_l\|$ and $\sum_l|\gamma_l|$.

Now that ${\{\bw^l\}}_l$ is bounded, let $\bw^\ast$ be any limit point of it and ${\{\bw^{l_i}\}}_i$ be a subsequence such that $\lim_{i\rightarrow \infty}\bw^{l_i}=\bw^\ast$. Let $l\rightarrow+\infty$,~\eqref{proof1_1} becomes 
\begin{align}\label{proof1_3}
\vspace{-1mm}
&\lim_{l\rightarrow+\infty}\|\bw^{l+1}-\tilde{\bw}\|\\
&\leq \lim_{l\rightarrow+\infty}\|\bw^{l}-\tilde{\bw}\|=\lim_{i\rightarrow\infty}\|\bw^{l_{i}}-\tilde{\bw}\|=\|\bw^{*}-\tilde{\bw}\|
\vspace{-1mm}
\end{align}
which implies that all limit point of $\{\bw^{k}\}$, if more than one, have an equal distance to $\tilde{\bw}$. On the other hand,
\begin{align*}
\vspace{-1mm}
  &\|\bw^{l_i+1}-s^\ast(h^\ast(\bw^\ast))\|\\
  &=\|s^{l_i}(h^{l_i}(\bw^{l_i}))-s^\ast(h^\ast(\bw^\ast))\|\\
  &\leq\|s^{l_i}(h^{l_i}(\bw^{l_i}))-s^{l_i}(h^{l_i}(\bw^\ast))\|\\
  &\quad+\|s^{l_i}(h^{l_i}(\bw^\ast))-s^\ast(h^\ast(\bw^\ast))\|\\
  &\leq\|\bw^{l_i}-\bw^\ast\|+\|s^{l_i}(h^{l_i}(\bw^\ast))-s^\ast(h^\ast(\bw^\ast))\|,
  \vspace{-1mm}
\end{align*}
which implies $\lim_{i\rightarrow\infty}\bw^{l_{i}+1}=s^{\ast}(h^{\ast}(\bw^{\ast}))$,
so that $s^{\ast}(h^{\ast}(\bw^{\ast}))$ is also a limit point of $\{\bw^{k}\}$ which is required to have the same distance to $\tilde{\bw}$ as $\bw^{\ast}$ does, that is $\|\bw^{\ast}-\tilde{\bw}\|=\|s^{\ast}(h^{\ast}(\bw^{\ast}))-\tilde{\bw}\|=\|s^{\ast}(h^{\ast}(\bw^{\ast}))-s^{\ast}(h^{\ast}(\tilde{\bw}))\|$.
Since $\tilde{\bw}$ is any fixed point of $s^{\ast}(h^{\ast}(\cdot))$, replacing $\tilde{\bw}$ with $\bw^{\ast}$ in~\eqref{proof1_3} gives rise to $  \lim_{l\rightarrow \infty}\|\bw^{l}-\bw^{*}\|=\lim_{i\rightarrow\infty}\|\bw^{l_{i}}-\bw^{*}\|=\|\bw^{*}-\bw^{*}\|=0$,
which implies
\begin{equation}\label{w_convergence}
  \lim_{l\rightarrow \infty} \bw^{l} =\bw^{*}.
\end{equation}
Combining~\eqref{eq:u_update} and~\eqref{w_convergence} leads to
\begin{align*}
&\lim_{l\rightarrow+\infty} \bu^{l}\\
&~\!=\lim_{l\rightarrow+\infty}{\left({\bD^{l}}^\mathrm{T} \bD^{l} + \frac{\mu}{\beta^{l}}\bK^\mathrm{T} \bK\right)}^{-1}
           \!\!\left({\bD^{l}}^\mathrm{T} \bw^{l}+\frac{\mu}{{\beta}^{l}}\bK^\mathrm{T} \by\right)=\bu^{\ast}
\end{align*}
which completes the proof of Theorem~\ref{convergence_l}.
\end{proof}

\section{Proof of Theorem.~\ref{thm:convergence_rate}}\label{sec:proof_thm2}
According to~\eqref{eq:function_h_l},~\eqref{w_l1} as well as the expansiveness of $s^{l}$ in \eqref{nonexpansiveness}, we have
\begin{equation}
\label{theorm3_1}
\vspace{-2mm}
\begin{split}
&\|\mathbf{w}^{l+1}-\mathbf{w}^{*}\|\\
&\quad= \|s^{l}(h^{l}(\mathbf{w}^{l}))-s^{\ast}(h^{\ast}(\mathbf{w}^{\ast}))\|\\
&\quad=\|s^{l}(h^{l}(\mathbf{w}^{l}))-s^{l}(h^{l}(\mathbf{w}^{\ast}))+s^{l}(h^{l}(\mathbf{w}^{\ast}))-s^{\ast}(h^{\ast}(\mathbf{w}^{\ast}))\|\\
&\quad\leq \|h^{l}(\mathbf{w}^{l})-h^{l}(\mathbf{w}^{*})\|
+\|s^{l}(h^{l}(\mathbf{w}^{\ast}))-s^{\ast}(h^{\ast}(\mathbf{w}^{\ast}))\|.
\end{split}
\vspace{-2mm}
\end{equation}
For the term $ \|h^{l}(\mathbf{w}^{l})-h^{l}(\mathbf{w}^{*})\|$ alongside with \eqref{eq:function_h_l}, we have
\begin{equation}
    \begin{split}
    \vspace{-3mm}
     \|h^{l}(\mathbf{w}^{l})&-h^{l}(\mathbf{w}^{*})\|^2\\
     &=\|\bD^{l}(\bM^{l})^{-1}(\bD^{l})^\mathrm{T}(\mathbf{w}^{l}-\mathbf{w}^{*})\|^2\\
&= \|\mathbf{G}^{l}(\mathbf{w}^{l}-\mathbf{w}^{*})\|^2\\
&= (\mathbf{w}^{l}-\mathbf{w}^{\ast})^{\mathrm{T}}(\mathbf{G}^{l})^{2}(\mathbf{w}^{l}-\mathbf{w}^{\ast})\\
& \leq \rho((\mathbf{G}^{l})^{2})\|\mathbf{w}^{l}-\mathbf{w}^{\ast}\|^{2}
    \end{split}
    \vspace{-3mm}
\end{equation}
so that
\begin{equation}
\begin{split}
\label{first_term}
\vspace{-1mm}
\|h^{l}(\mathbf{w}^{l})-h^{l}(\mathbf{w}^{*})\|\leq \sqrt{\rho((\mathbf{G}^{l})^{2})}\|\mathbf{w}^{l}-\mathbf{w}^{\ast}\|
\vspace{-1mm}
\end{split}
\end{equation}
where $\mathbf{G}^{l}=\bD^{l}(\bM^{l})^{-1}(\bD^{l})^\mathrm{T}$, and $\rho((\mathbf{G}^{l})^{2})$ is the spectral radii of the matrix $(\mathbf{G}^{l})^{2}$ satisfying $\rho((\mathbf{G}^{l})^{2})\leq\rho(\mathbf{G}^{l})\leq 1$. Then we are going to calculate the upper bound of the term $\|s^{l}(h^{l}(\mathbf{w}^{\ast}))-s^{\ast}(h^{\ast}(\mathbf{w}^{\ast}))\|$. 
According to the definition of $s^{l}$ in \eqref{eq:w_update} and \eqref{zeta_l} by replacing $\tilde{u}$ with $\bu^{\ast}$, we have
\begin{equation}
    \begin{split}
    \label{second_term}
    \vspace{-1mm}
\|s^{l}(h^{l}&(\mathbf{w}^{\ast}))-s^{\ast}(h^{\ast}(\mathbf{w}^{\ast}))\|\\
 &\qquad\leq\|\xi_l\|\|\bu^{\ast}\|+\frac{2|\gamma_l|}{{(\bar{\beta})}^2} + \frac{2}{\bar{\beta}}\frac{\|\xi_l\|\|\bu^{\ast}\|}{\|\bar{\bD}\bu^{\ast}\|}.
 \vspace{-1mm}
    \end{split}
\end{equation}
Combining the condition of $\|h^{\ast}_{J}(\mathbf{w}^{\ast})\|=\|\bar{\bD}\bu^{\ast}\|\geq \frac{1}{\bar{\beta}}$, \eqref{second_term} further equals to
\begin{equation}
    \begin{split}
    \label{second_term_2}
    \vspace{-3mm}
    \|s^{l}(h^{l}(\mathbf{w}^{\ast}))-s^{\ast}(h^{\ast}(\mathbf{w}^{\ast}))\|\leq 3\|\xi_l\|\|\bu^{\ast}\|+\frac{2|\gamma_l|}{{(\bar{\beta})}^2}
    \end{split}
     \vspace{-3mm}
\end{equation}
where  $\bu^{\ast}=\left({(\bar{\bD})^\mathrm{T} \bar{\bD}} + \frac{\mu}{\bar{\beta}}\bK^\mathrm{T} \bK\right)^{-1}\left((\bar{\bD})^\mathrm{T}\bw^{\ast}+\frac{\mu}{\bar{\beta}}\bK^\mathrm{T} \by\right).$
Combining \eqref{first_term} and \eqref{second_term_2}, 
\eqref{theorm3_1} becomes
\begin{equation}
\begin{split}
 \vspace{-1mm}
\label{wl+1}
\|\mathbf{w}^{l+1}-\mathbf{w}^{*}\|\leq \lambda_{l}\|\bw^{l}-\bw^{*}\|+ b_{l}
\end{split}
 \vspace{-1mm}
\end{equation}
where $\lambda_{l}=\sqrt{\rho((\mathbf{G}^{l})^{2}))}$, $b_{l} = 3\|\xi_l\|\|\bu^{\ast}\|+\frac{2|\gamma_l|}{{(\bar{\beta})}^2}$
and $\bu^{\ast}$ is defined in \eqref{u_J}.

We next show that, $\exists\lambda_{\max}$ so that $\lambda_{l}\leq\lambda_{\max}<1,\forall l$. Indeed, since $\mathbf{G}^l$ is symmetric positive semi-definite, $\rho(\mathbf{G}^l)=\left\|\mathbf{G}^l\right\|$ is continous over $\mathbf{D}^l$ and $\beta^l$, and hence $\lim_{l\rightarrow\infty}\mathbf{G}^l=\mathbf{G}^\ast=\bar{\bD}{\left({\bar{\bD}}^T\bar{\bD}+\frac{\mu}{\bar{\beta}}\bK^T\bK\right)}^{-1}\bar{\bD}$. Choose $l_0\in\mathbb{N}$ sufficient large so that
$
  \left|\rho(\mathbf{G}^l)-\rho(\mathbf{G}^\ast)\right|<\frac{1}{2}\left[1 - \rho(\mathbf{G}^\ast)\right],\quad\forall l\geq l_0,
$
then $\rho(\mathbf{G}^l)<\frac{1}{2}\left[1 + \rho({\mathbf{G}}^\ast)\right] < 1,\quad\forall l\geq l_0$. Pick $\lambda_{\max}=\max\left\{\lambda_1,\lambda_2,\dots,\lambda_{l_0},\sqrt{\frac{1}{2}\left[1 + \rho({\mathbf{G}}^\ast)\right]}\right\}$, then we have $\lambda_l\leq\lambda_{\max}<1,\,\forall l$. Iterating~\eqref{wl+1} with $l$ times, we have
\begin{equation}
\begin{split}
\label{convergence_rate_2}
\vspace{-3mm}
    &\|\mathbf{w}^{l+1}-\mathbf{w}^{*}\|\\
    &\quad\leq (\lambda_{\max})^{l+1}\|\mathbf{w}^{0}-\mathbf{w}^{*}\|+\sum_{i=0}^{l}(\lambda_{\max})^{l-i}b_{i}\\
    &\quad=(\lambda_{\max})^{l+1}\|\mathbf{w}^{0}-\mathbf{w}^{*}\|+ (\lambda_{\max})^{l}\otimes b_{l}\\
    &\quad=(\lambda_{\max})^{l+1}\|\mathbf{w}^{0}-\mathbf{w}^{*}\|+\mathcal{F}^{-1}\left(\widehat{(\lambda_{\max})^{l}}\odot \widehat{b_{l}}\right)\\
    &\quad=(\lambda_{\max})^{l+1}\|\mathbf{w}^{0}-\mathbf{w}^{*}\|+\mathcal{F}^{-1}\left(\frac{B(w)}{1-\lambda_{\max}e^{-jw}}\right)\\   
\end{split}
 \vspace{-4mm}
\end{equation}
where $\otimes$ is the discrete convolution operator, $\widehat{(\lambda_{\max})^{l}}=\frac{1}{1-\lambda_{\max}e^{-jw}}$, and $B(w)$ is the Discrete-time Fourier transform of ${(\lambda_{\max})}^{l}$ and $b_{l}$ respectively.
\end{appendices}
\bibliographystyle{IEEEbib}
\bibliography{TCI}

\end{document}